\newenvironment{bprooftree}
  {\leavevmode\hbox\bgroup}
  {\DisplayProof\egroup}
\def\namedlabel#1#2{\begingroup
    #2%
    \def\@currentlabel{#2}%
    \phantomsection\label{#1}\endgroup
}
\newcommand{\CLp}{\sf{CLp}}
\newcommand{\Pts}{\sf{PtS}}
\newcommand{\Bes}{\sf{BeS}}
\newcommand{\At}{\sf{At}}
\newcommand\ie{\hbox{\textit{i.e.}}}
\newcommand\eg{\hbox{\textit{e.g.}}}
\newcommand\dat{\Delta_{\At}}
\newcommand\gat{\Gamma_{\At}}
\newcommand\cfv{\Vdash^{\mathsf{cf}}}
\newcommand\bb{\mathcal{B}}
\newcommand\bc{\mathcal{C}}
\newcommand\bd{\mathcal{D}}
\newcommand\bu{\mathcal{U}}
\newcommand\bq{\mathcal{Q}}
\newcommand\st{\mathcal{ST}}
\newcommand\hs{\mathcal{HS}}
\newcommand\seq{\Rightarrow}
\newcommand{\ainit}{\mathsf{Ainit}}
\newcommand{\acut}{\mathsf{Acut}}
\newcommand{\init}{\mathsf{init}}
\newcommand{\ust}{\mathcal{U}_\st}
\newcommand{\uhs}{\mathcal{U}_\hs}
\begin{document}

\title{A Sequent Calculus Perspective on Base-Extension Semantics (Technical Report)}
\author{Victor Barroso-Nascimento\orcidlink{0000-0002-3990-5996} \and Ekaterina Piotrovskaya\orcidlink{0009-0009-4217-6948} \and Elaine Pimentel\orcidlink{0000-0002-7113-0801}}

\institute{Department of Computer Science, University College London, UK\\
\email{\{v.nascimento,kate.piotrovskaya.21,e.pimentel\}@ucl.ac.uk}}

\authorrunning{Barroso-Nascimento et al.}

\maketitle

\begin{abstract}

We define base-extension semantics ($\Bes$) using atomic systems based on sequent calculus rather than natural deduction. While traditional $\Bes$ aligns naturally with intuitionistic logic due to its constructive foundations, we show that sequent calculi with multiple conclusions yield a $\Bes$ framework more suited to classical semantics. The harmony in classical sequents leads to straightforward semantic clauses derived solely from right introduction rules.
This framework enables a Sandqvist-style completeness proof that extracts a sequent calculus proof from any valid semantic consequence. Moreover, we show that the inclusion or omission of atomic cut rules meaningfully affects the semantics, yet completeness holds in both cases.

\keywords{Base-extension Semantics  \and Proof-theoretic Semantics \and Sequent Calculus.}

\end{abstract}

\section{Introduction}
When proposing a new semantic model for a well-known logic, one must establish its adequacy in at least two ways: first, it should be correct and complete w.r.t. another characterisation of the logic, \eg\  a proof system, an axiomatic description, or another semantic model; second, it should exhibit some {\em novelty} or {\em advantage} compared to existing formulations.

For example, Kripke semantics is often favoured for counter-model extraction; truth tables are usually easy to compute and understand; and categorical semantics can help bridge logic with other areas since, at an abstract categorical level, different disciplines can end up looking quite similar. 

In this work, we will advocate for \emph{proof-theoretic semantics} as an adequate model for reasoning about {\em classical logics}.

Proof-theoretic semantics ($\Pts$) has recently attracted considerable attention, as it reduces {\em validity} to {\em derivability}, making it possible to employ the rich machinery of proof theory to describe the semantics of logical operators~\cite{Prawitz-2006,pts-91}. Beyond its technical appeal, $\Pts$ is grounded in {\em inferentialism}, offering a compelling philosophical account of the meaning of logical connectives, derived from the inference rules that govern them in a given proof system.

During the genesis of $\Pts$, it was often believed that its close association with natural deduction and apparent reliance  on intuitionistic proof-theoretic principles would yield completeness with respect to intuitionistic logic~\cite{Sanz2016-SANODV,PRAWITZ1971235}. Dummett even made the stronger claim that it would be impossible to obtain proof-theoretic semantics for classical logic without use of classical {\em canons of reasoning}~\cite[p. 270]{dummett1991logical}, which was later disproven by Sandqvist~\cite{Sandqvist} (see also~\cite{DBLP:journals/igpl/Makinson14,DBLP:journals/corr/abs-2503-05364}). Such beliefs were also justified by the very fact that the semantic clauses in the first days of $\Pts$ were inspired by introduction rules, since the possibility of defining validity through such means was interpreted as a consequence of the harmonic relations between rules that is characteristic of intuitionistic natural deduction~\cite{Dummett1973-DUMFPO-2}. 

According to traditional readings of the often quoted remark by Gentzen~\cite{Gentzen1969}, introduction rules can be interpreted as ``definitions'' of logical connectives~\cite{sep-proof-theoretic-semantics}. Dummett further argues that this definitional nature has semantic bearing due to making introduction rules self-justifying~\cite[p. 251]{dummett1991logical}.  On the other hand, since classical natural deduction requires intervention of extraneous principles instead of being fully determined by rules for the connectives, it would seem that such definitions are intuitionistic in nature. This is why, absent explicit inclusion of additional canons of reasoning, definitions based on introduction rules were specifically expected to yield intuitionistic semantics.

As remarked by Schroeder-Heister in Section 3.5. of~\cite{sep-proof-theoretic-semantics}, $\Pts$ seems to be biased in favour of intuitionistic logic {\em precisely} due to its reliance on natural deduction. Classical logic is perfectly harmonic if multiple succedent sequent calculus is considered instead~\cite{Cook10.1093/oxfordhb/9780195325928.003.0011}, arguably even more so than intuitionistic logic. Since validity definitions in $\Pts$ rely on proof-theoretic harmony to provide semantics for the logical connectives, one could reasonably expect a sequent-based definition of $\Pts$ to yield a semantics for classical logic.

This paper fulfils this expectation by breaking with the $\Pts$ tradition of relying on natural deduction and intuitionistic systems and introducing the first Base-extension semantics ($\Bes$) formulated for sequent systems. 

$\Bes$ is a strand of $\Pts$ in which proof-theoretic validity is defined relative to a specified set of inference rules governing basic formulas of the language~\cite{Sandqvist2015IL}. These basic formulas are often atomic propositions such as ``$a$ is a subset of $b$'', ``$b$ is a subset of $a$'', or ``$a$ is equal to $b$''. If we represent these as $r$, $s$, and $t$, respectively, a possible inference rule might be:
\[
\infer{t}{r & s}
\]
This resembles the approach of atom definitions~\cite{DBLP:journals/logcom/HallnasS91,DBLP:journals/tocl/McDowellM02,DBLP:journals/tocl/MillerT05,DBLP:journals/apal/MarinMPV22}, where, for instance, set equality can be ``defined'' in terms of mutual containment---we note, however, that this analogy is purely motivational; we are not claiming to define set-theoretic equality within propositional logic.

The notion of consequence in $\Bes$ is given by an inductively defined semantic judgment, whose base case refers to provability in a given atomic system (or a \emph{base}). That is, for any base $\bb$, provability of an atomic formula is defined as:
\[
\Vdash_{\bb} p \mbox{ iff } p \mbox{ is provable in } \bb 
\]
The remaining semantic clauses are defined inductively. For example, conjunction is handled similarly to Kripke semantics:
\[\Vdash_{\bb} A \land B \mbox{ iff  } \Vdash_{\bb} A \mbox{ and} \Vdash_{\bb} B\]
Beyond this point, the similarity to Kripke semantics fades. For instance, the constant $\bot$ is often defined as {\em absurdity} in second order logic instantiated to atoms (as in the system $\mathsf{Fat}$~\cite{DBLP:journals/jsyml/FerreiraF13}):
\[\Vdash_{\bb} \bot \mbox{ iff  } \Vdash_{\bb} p \mbox{ for all atoms} \ p \]
The use of semantic clauses making $\bot$ unsatisfiable, as in Kripke semantics, leads to serious issues, as it allows one to prove $\Vdash_{\mathcal{B}}  \neg \neg p$ for every atom $p$~\cite{DBLP:journals/jphil/PiechaSS15}. It is still possible, however, to avoid this problem by considering $\bot$ as a basic formula and requiring bases to be consistent (see \eg\ \cite{DBLP:journals/corr/abs-2306-03656,barrosonascimento2025prooftheoreticapproachsemanticsclassical}). 

Shifting from natural deduction to sequent systems brings a whole new perspective to the meaning of connectives---it also introduces new challenges, as discussed in the next section. Indeed, in classical logic, $\bot$ is introduced from the empty sequent via an invertible rule, and our semantic definition adequately reflects this behaviour:
\[
\Vdash_{\mathcal{B}} \bot, \Gamma \mbox{ iff }\Vdash_{\mathcal{B}} \Gamma
\]
This allows the definition of connectives using the invertibility of rules, aligning the semantic clauses closely with the proof rules of classical sequent systems, as well as bringing {\em proof theory meta-reasoning} to the core of $\Pts$.

Moreover, our formulation also supports a constructive proof of completeness for classical logic, mirroring the strategy employed by Sandqvist to prove completeness of traditional $\Bes$ with respect to intuitionistic logic~\cite{Sandqvist2015IL}. In this setting, quantification over atoms, previously used in the semantic clauses for disjunction and $\bot$, is no longer necessary.

Finally, the sequent-style formulation of $\Bes$ deepens our understanding of general $\Pts$. In particular, it makes explicit the semantic role of the cut rule, an aspect often obscured in natural deduction-based approaches.

\section{Sequent based base-extension semantics}\label{sec:bes}
$\Bes$ is founded on an inductively defined judgment called {\em support}, which mirrors the syntactic structure of formulas. The inductive definition begins with a base case: the support of atomic propositions is determined by derivability in a given {\em base}---a specified collection of inference rules that govern atomic propositions. 


In this work, we adapt Sandqvist's terminology~\cite{Sandqvist2015IL} to the sequent setting.

\subsection{Atomic Derivability}

The $\Bes$ begins by defining derivability in a {\em base}. We use, as does Sandqvist, systems containing rules over basic sentences for the semantical analysis.


The propositional {\em base language} is assumed to have a set $\At=\{p_1, p_2,\ldots\}$ of countably many atomic propositions.
The elements of $\At$ will be called {\em basic sentences}, or simply {\em atoms}.

Let $\Gamma_{\At},\Delta_{\At}$ be 
sets of atoms. An {\em atomic sequent} is denoted by $\gat \Rightarrow \dat$, where  $ \Rightarrow \dat$ denotes  $\varnothing \Rightarrow \dat$.  

\begin{definition}[Base]
An {\em atomic sequent system} (or {\em sequent base}) $\bb$ is a (possibly empty) set of atomic sequent rules of the form

\begin{center}

\begin{bprooftree}
    \AxiomC{$\Gamma^{1}_{\At} \Rightarrow \dat^{1}$}
    \AxiomC{\ldots}
     \AxiomC{$\Gamma^{n}_{\At} \Rightarrow \dat^{n}$}
     \TrinaryInfC{$\gat \Rightarrow \dat$}
\end{bprooftree}

\end{center}
The sequence $\langle \Gamma^{1}_{\At} \Rightarrow \dat^{1}, \ldots , \Gamma^{n}_{\At} \Rightarrow \dat^{n} \rangle$ of premises in a rule can be empty---in this case the rule is called an {\em atomic axiom}.
\end{definition}
As in this work we will be referring to sequent systems only, from now on we will drop the word ``sequents'' from the objects just defined.

\begin{definition}[Extension]
    Let $\bb,\bc$ be bases. We say that $\bc$ is an {\em extension} of  $\bb$ (written $\bc \supseteq \bb$) if $\bb$ is a subset of $\bc$.
\end{definition}

Before defining atomic derivability, it is important to clarify key distinctions between natural deduction and sequent calculus. 

First, sequent calculus derivations always proceed from unconditional tautologies to unconditional tautologies, whereas natural deduction relies on hypothetical assumptions, often treating contexts implicitly. Second, the rules in sequent calculus must explicitly account for contexts---even when arbitrary---while natural deduction may treat contexts implicitly or explicitly.

These differences carry subtle but important consequences. For instance, the identity axiom $A \Rightarrow A$ must be explicitly included in sequent systems, whereas in natural deduction such derivability is implicit in the presence of an assumption $A$. 
Similarly, consider the contrast between the following rules:

\begin{prooftree}
\AxiomC{$A \vee B$}
\AxiomC{$[A]$}
\noLine
\UnaryInfC{$\vdots$}
\noLine
\UnaryInfC{$C$}
\AxiomC{$[B]$}
\noLine
\UnaryInfC{$\vdots$}
\noLine
\UnaryInfC{$C$}
\RightLabel{\scriptsize{$\vee$-elim}}
\TrinaryInfC{$C$}
\DisplayProof
\qquad
\AxiomC{}
\noLine
\UnaryInfC{}
\noLine
\UnaryInfC{}
\noLine
\UnaryInfC{}
\noLine
\UnaryInfC{}
\noLine
\UnaryInfC{}
\noLine
\UnaryInfC{}
\noLine
\UnaryInfC{$\Gamma, A \Rightarrow \Delta$}
\AxiomC{}
\noLine
\UnaryInfC{}
\noLine
\UnaryInfC{}
\noLine
\UnaryInfC{}
\noLine
\UnaryInfC{}
\noLine
\UnaryInfC{}
\noLine
\UnaryInfC{}
\noLine
\UnaryInfC{$\Gamma', B \Rightarrow \Delta'$}
\RightLabel{$L \lor$}
\BinaryInfC{$\Gamma, \Gamma', A \lor B \Rightarrow \Delta, \Delta'$}
\end{prooftree}

It is implicitly assumed that $\lor$-elim can be applied regardless of the sets of formulas $\Delta$, $\Delta'$ and $\Delta''$ used to derive its major premise $A \lor B$ and minor premises $C$, but the rule $L \lor$ represents such contextual sets explicitly (albeit still letting them be arbitrary). It comes as no surprise, then, that natural deduction systems for logics in which context matters (such as linear logic) have to make them explicit again~\cite{LinearLogicND10.1093/jigpal/12.6.601}. 

This distinction becomes critical in atomic systems. In natural deduction, atomic rules can be applied freely, independent of the surrounding context, as illustrated by the rule on the left below. 
\[
\infer{r}{\deduce{t}{\deduce{\vdots}{[p,q]}}} \qquad \qquad
\infer{\seq r}{p,q\seq t}
\]
In contrast, a sequent rule---such as the one on the right above---can only be applied when the exact context 
$p,q$ is present\footnote{This in a premises-to-conclusion reading of the rule, as done in proof-construction. In the bottom-up reading, as done in proof-search, the context should be empty in order to apply this rule.}. This highlights the need to decide how contexts are handled in sequent-style atomic calculi.

If contexts are not treated as arbitrary, three main difficulties arise:
\begin{itemize}
\item {\em Proof rules:} To simulate generality, one rule must be included for every possible atomic context. For example, consider a base that contains a rule of the following form for each set $\Gamma_{\At}$ of atomic formulas:

\begin{prooftree}
\AxiomC{$\Gamma_{\At}, q \Rightarrow t$}
\UnaryInfC{$\Gamma_{\At} \Rightarrow r$}
\end{prooftree}

\noindent
Even if contexts are syntactically fixed, the inclusion of a rule for every conceivable atomic context allows the base to behave as though it included a single rule defined over arbitrary contexts. While this is technically feasible---\eg, through infinitary rule bases---it complicates the system (see~\cite{barrosonascimento2025prooftheoreticapproachsemanticsclassical}).

\item {\em Compositionality:} Fixed contexts restrict derivation composition to cases where sequents match exactly, limiting expressiveness:

\begin{prooftree}
    \AxiomC{$\Gamma_{\At}  \Rightarrow p$}
    \AxiomC{$\Delta_{\At}, p \Rightarrow q$ }
    \RightLabel{\scriptsize{Cut}}
    \BinaryInfC{$\Gamma_{\At}, \Delta_{\At} \Rightarrow q$}
\end{prooftree}

\noindent
To address this, one can close bases w.r.t. atomic cut rules, which allow compositionality through controlled context merging. We will show that adopting this choice will have a direct impact on the semantics principles that can be validated.

\item {\em Context cumulativity:} Consider the following atomic rule:

\begin{prooftree}
    \AxiomC{$\Gamma_{\At} \Rightarrow p$}
    \AxiomC{$\Delta_{\At}, q \Rightarrow r$}
    \BinaryInfC{$\Theta_{\At} \Rightarrow s$}
\end{prooftree}

\noindent
Even though contexts are always carried from premises to conclusion in both natural deduction and sequent calculus, sets of atoms occurring in the premises are not required to be related in any way to the sets occurring in the conclusion if contexts are fixed.
This means that in fixed-context settings context accumulation must be enforced at the level of the rules themselves, rather than being built into the notion of deduction. This allows for simpler rule definitions, but undermines general guarantees like context monotonicity across base extensions. Again, atomic cut rules can reintroduce cumulativity, but at the cost of complicating the calculus.
\end{itemize}
Given these challenges, and motivated by the semantic role of cut in proof-theoretic semantics, we choose to allow arbitrary atomic contexts in our definition of derivability. This ensures generality, supports context cumulativity, and facilitates semantic interpretation. We also adopt an unconditional notion of derivability, in line with sequent calculus traditions, though a hypothetical variant could be used with similar results.

\begin{definition}[Derivability] \label{def:derivability}
For every $\mathcal{B}$, the \emph{derivability relation} (written $\vdash_{\mathcal{B}}$) is defined as follows.

\begin{description}[itemsep=0.5em, font=\normalfont]



\item[(Axiom/Weakening)] For any atomic axiom in $\bb$ of the form  \[\infer{\Gamma_{\At} \Rightarrow \Delta_{\At}}{}\]  $\vdash_{\bb} \Theta_{\At}, \Gamma_{\At} \Rightarrow \Delta_{\At}, \Sigma_{\At}$ holds for any sets $\Theta_{\At}$ and $\Sigma_{\At}$ of atoms;

\item[(Mix)] If $\vdash_{\bb}\Theta^{1}_{\At}, \Gamma^{1}_{\At}  \Rightarrow \Delta^{1}_{\At} , \Sigma^{1}_{\At}, \ \  \ldots, \ \  \vdash_{\mathcal{B}}\Theta^{n}_{\At} , \Gamma^{n}_{\At} \Rightarrow \Delta^{n}_{\At} , \Sigma^{n}_{\At}$ hold in $\bb$
and the following rule is in $\bb$:
\[
\infer{\Gamma_{\At} \Rightarrow \Delta_{\At}}{\Gamma^{1}_{\At} \Rightarrow \Delta^{1}_{\At}&\ldots&\Gamma^{n}_{\At} \Rightarrow \Delta^{n}_{\At}}
\]
 then $\vdash_{\bb}\Theta^{1}_{\At} , \ldots , \Theta^{n}_{\At} , \Gamma_{\At} \Rightarrow \Delta_{\At} , \Sigma^{1}_{\At} , \ldots , \Sigma^{n}_{\At}$ holds.

\end{description}

\end{definition}

The following rules play a central role in our framework:

\begin{prooftree}
\AxiomC{}
\RightLabel{\small $\ainit$}
\UnaryInfC{$\Gamma_{\At}, p \Rightarrow p, \Delta_{\At}$}
\DisplayProof
\qquad
    \AxiomC{$\Gamma_{\At}^1 \Rightarrow \dat^1,p$}
    \AxiomC{$p, \gat^2 \Rightarrow \dat^2$}
    \RightLabel{\small $\acut$}
    \BinaryInfC{$\Gamma_{\At}^1, \gat^2 \Rightarrow \dat^1,\dat^2$ }
\end{prooftree}

\noindent
Indeed, we will restrict our attention to systems that are closed under the $\ainit$ rule for every atom~$p$ and all sets of atoms~$\Gamma_{\At}, \Delta_{\At}$. We observe that, although the contexts $\Gamma_{\At}$ and $\Delta_{\At}$ in $\ainit$ are not strictly necessary, they simplify certain steps in the completeness proof.

Regarding $\acut$, we will consider both: systems that are closed under this rule and those that are not.
\begin{definition}[Structural/Cut-free base] \label{def:stbase}
  The structural (cut-free) base $\mathcal{ST}$ ($\mathcal{HS}$) is the smallest atomic system closed under atomic axiom and atomic cut (atomic axiom).
\end{definition}

As will be shown later, even though classical logic is complete with respect to our semantics regardless of the presence or absence of atomic cuts, the cut rule proves to be not only impactful for the semantics of individual bases, but also relevant for determining the shape of the completeness proof.

\subsection{Semantics}
In this subsection we define the proposed semantics and sketch some results that will be key for the soundness and completeness results. 

We start with the definition of the support relation, which reduces to derivability in $\bb$ in the base case.

\begin{definition}[Support] \label{def:support}
Let $X_{\At}\subseteq\At$ for any set $X$.
    \emph{Support} in a base $\bb$ (written $\Vdash_{\bb}$) is defined as follows:

    \begin{itemize}[itemsep=0.5em, font=\normalfont]
        \item[\namedlabel{eq:supp-at}{(At)}] $\Vdash_{\bb} \Gamma_{\At}$ iff $\vdash_{\bb} \ \Rightarrow \Gamma_{\At}$;
        
        \item[\namedlabel{eq:supp-and}{$(\land)$}] $\Vdash_{\bb} A \land B, \Gamma$ iff $\Vdash_{\bb} A, \Gamma$ and $\Vdash_{\bb} B, \Gamma$;
        
        \item[\namedlabel{eq:supp-or}{$(\lor)$}] $\Vdash_{\bb} A \lor B, \Gamma$ iff $\Vdash_{\bb} A, B, \Gamma$;
        
        \item[\namedlabel{eq:supp-imp}{$(\to)$}] $\Vdash_{\bb} A \to B, \Gamma$ iff $A \Vdash_{\bb} B, \Gamma$;

        \item[\namedlabel{eq:supp-bot}{$(\bot)$}] $\Vdash_{\mathcal{B}} \bot, \Gamma$ iff $\Vdash_{\mathcal{B}} \Gamma$
        
        \item[\namedlabel{eq:supp-inf}{(Inf)}] $\left\{A^{1},\ldots,A^{n}\right\} 
        \Vdash_{\bb} \Delta$ iff for all $\bc \supseteq \bb$ and for all $\left\{\Theta_{\At}^{i}\right\}_{1\leq i\leq n}$, if $\Vdash_{\bc} \Theta^{i}_{\At},A^{i}$ for all $1\leq i\leq n$ 
        then $\Vdash_{\bc} \Theta^{1}_{\At}, \ldots , \Theta^{n}_{\At}, \Delta$
    \end{itemize}
\end{definition}

\begin{definition}[Validity/Cut-free validity] \label{def:validity}
    We say that an inference from $\Gamma$ to $\Delta$ is \emph{valid} (\emph{cut-free valid}), written as $\Gamma \Vdash \Delta$ ($\Gamma \cfv\Delta$), if $\Gamma \Vdash_{\bb} \Delta$ for all $\bb \supseteq \st$ ($\bb \supseteq \hs$).
\end{definition}

Observe that the $\Bes$ approach for the semantic clause for disjunction in intuitionistic logic presented in~\cite{Sandqvist2015IL} closely mirrors the disjunction elimination rule in natural deduction: 
\[
\Vdash_{\bb} A \lor B \mbox{ iff }  \forall \bc \supseteq \bb,p \in {\At}, \mbox{ if } A \Vdash_{\bc} p \mbox{ and } B \Vdash_{\bc} p \mbox{ then } \Vdash_{\bc} p
\]
In particular, note the need for the use of quantification over atoms for the definition of semantic clauses. 

This is not the case in the sequent presentation of $\Bes$, which is perfectly capable of using only right rules (the sequent calculus equivalent of natural deduction's introduction rules) for the definition of its clauses. In particular, the semantic clause for disjunction reflects the invertible multiplicative right inference rule in sequent calculus (see Fig~\ref{fig:CL}).

%

We continue with the standard monotonicity result for bases, which is proven by induction on the complexity of context formulas.

\begin{lemma}[Monotonicity] \label{lm:monotonicity}
    If $\Vdash_{\bb} \Delta$ and $\bc \supseteq \bb$, then $\Vdash_{\bc} \Delta$.
\end{lemma}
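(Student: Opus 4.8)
The plan is to argue by induction on the complexity of the context $\Delta$, measured by the total number of connective occurrences in its formulas, exploiting the fact that each support clause in Definition~\ref{def:support} reduces $\Vdash_\bb\Delta$ to support or consequence statements about strictly simpler contexts. I treat $\Delta$ as a set and, at each step, single out one formula carrying a principal connective to decompose; if no such formula exists, $\Delta$ is purely atomic and constitutes the base case.

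For the base case $\Delta=\Gamma_\At$, clause \ref{eq:supp-at} gives $\Vdash_\bb\Gamma_\At$ iff $\vdash_\bb\ \Rightarrow\Gamma_\At$. Here I would first record the monotonicity of atomic derivability itself: since $\bc\supseteq\bb$ means $\bc$ contains every rule of $\bb$, a straightforward induction on the derivation witnessing $\vdash_\bb\ \Rightarrow\Gamma_\At$ shows that the very same applications of (Axiom/Weakening) and (Mix) remain available in $\bc$, so $\vdash_\bc\ \Rightarrow\Gamma_\At$, and thus $\Vdash_\bc\Gamma_\At$ by \ref{eq:supp-at}.

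The inductive cases for $\land$, $\lor$, and $\bot$ are uniform. Each of \ref{eq:supp-and}, \ref{eq:supp-or}, \ref{eq:supp-bot} rewrites $\Vdash_\bb\Delta$ as a (conjunction of) support claims over contexts obtained by removing the principal connective: $A,\Gamma$ and $B,\Gamma$ for $\land$; $A,B,\Gamma$ for $\lor$; and $\Gamma$ for $\bot$. In each case the resulting contexts have strictly smaller connective count, so I apply the induction hypothesis to transport each claim from $\bb$ to $\bc$, and then reassemble via the same clause to conclude $\Vdash_\bc\Delta$.

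The implication case is the one to treat with care, since \ref{eq:supp-imp} unfolds $\Vdash_\bb A\to B,\Gamma$ into the consequence statement $A\Vdash_\bb B,\Gamma$, which is governed by \ref{eq:supp-inf} rather than by a plain support clause. This case does not invoke the induction hypothesis; it follows directly from the universal quantification over extensions built into \ref{eq:supp-inf}, together with transitivity of $\supseteq$: any $\bd\supseteq\bc$ also satisfies $\bd\supseteq\bb$, so the condition guaranteed for all $\bd\supseteq\bb$ holds a fortiori for all $\bd\supseteq\bc$, giving $A\Vdash_\bc B,\Gamma$ and hence $\Vdash_\bc\Delta$. I expect this interaction to be the only genuinely delicate point, namely that monotonicity at the implication connective is inherited from the extension-quantifier in the consequence clause rather than from the structural induction; the remaining work is bookkeeping, chiefly checking that the chosen measure strictly decreases in the $\land$, $\lor$, and $\bot$ cases regardless of which non-atomic formula is selected for decomposition.
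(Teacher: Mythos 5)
Your proposal is correct and follows essentially the same route as the paper's own proof: induction on the connective count of $\Delta$, a base case resting on the monotonicity of atomic derivability (every rule of $\bb$ is available in $\bc$), uniform decompose--apply-IH--reassemble steps for $\land$, $\lor$, $\bot$, and an implication case handled not by the induction hypothesis but by the extension-quantifier in \ref{eq:supp-inf} together with transitivity of $\supseteq$. You even isolate the same delicate point the paper's proof turns on, so there is nothing to add.
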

\begin{proof}
We will prove, by induction on the complexity of formulas in $\Delta$, that if $\Vdash_{\bb} \Delta$ and $\bc \supseteq \bb$, then $\Vdash_{\bc} \Delta$.

    Base case: $\Delta$ is atomic, {\ie} $\Delta = \Delta_{\At}$; if $\vdash_\bb \Rightarrow \Delta_{\At}$ then $\vdash_\bc \Rightarrow \Delta_{\At}$ (since $\mathcal{C}$ contain all rules of $\mathcal{B}$), and so if $\Vdash_{\bb} \Delta_{\At}$ then $\Vdash_{\bc} \Delta_{\At}$ by definition.
   
    Inductive case: as the induction hypothesis, take that $\Vdash_{\bc} \Delta$ holds true for any proper subset of $\Delta$ in place of $\Delta$.
    \begin{description}[itemsep=0.2em]
        \item[$\Delta = A \land B, \Delta':$] assume that $\Vdash_{\bb} A \land B, \Delta'$. Then $\Vdash_{\bb} A, \Delta'$ and $\Vdash_{\bb} B, \Delta'$ by~\ref{eq:supp-and}. Then $\Vdash_{\bc} A, \Delta'$ and $\Vdash_{\bc} B, \Delta'$ by induction hypothesis; by~\ref{eq:supp-and}, $\Vdash_{\bc} A \land B, \Delta'$.
        
        \item[$\Delta = A \lor B, \Delta':$] assume that $\Vdash_{\bb} A \lor B, \Delta'$. Therefore $\Vdash_{\bb} A, B, \Delta'$ by~\ref{eq:supp-or}. Then $\Vdash_{\bc} A, B, \Delta'$ by induction hypothesis, hence $\Vdash_{\bc} A \lor B, \Delta'$ by~\ref{eq:supp-or}.
        
        \item[$\Delta = A \to B, \Delta':$] assume that $\Vdash_{\bb} A \to B, \Delta'$. Then $A \Vdash_{\bb} B, \Delta'$ by~\ref{eq:supp-imp}. Let $\Theta_{\At}$ be an arbitrary atomic context. Take an arbitrary $\bd \supseteq \bc \supseteq \bb$ such that $\Vdash_{\bd} \Theta_{\At}, A$. Notice that, by transitivity of base extensions, $\bd \supseteq \bb$. Since $A \Vdash_{\bb} B, \Delta$ and $\Vdash_{\bd} \Theta_{\At}, A$, by~\ref{eq:supp-inf} we have $\Vdash_{\bd} \Theta_{\At}, B, \Delta$. Since $\mathcal{D} \supseteq \bc$ such that $\Vdash_{\bd} \Theta_{\At}, A$, we conclude $A \Vdash_{\bc} B, \Delta$ by~\ref{eq:supp-inf}, hence $\Vdash_{\bc} A \to B, \Delta$ by~\ref{eq:supp-imp}.

        \item[$\Delta = \bot, \Delta':$] assume that $\Vdash_{\bb} \bot, \Delta'$. Then $\Vdash_{\bb} \Delta'$ by~\ref{eq:supp-bot}. Then $\Vdash_{\bc} \Delta'$ by induction hypothesis, hence $\Vdash_{\bc} \bot, \Delta'$ by~\ref{eq:supp-bot}. \qed
        
    \end{description}
\end{proof}

This allows to prove the following result about validity:
\begin{theorem}[Validity] \label{thm:validity}
    $\Gamma \Vdash \Delta$ if and only if $\Gamma \Vdash_{\mathcal{ST}} \Delta$.
\end{theorem}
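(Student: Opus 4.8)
The plan is to prove the two directions separately, observing that the left-to-right implication is immediate while the real content sits in the converse. For the forward direction, recall that $\Gamma \Vdash \Delta$ unfolds, by Definition~\ref{def:validity}, as $\Gamma \Vdash_{\bb} \Delta$ for every $\bb \supseteq \st$. Since the extension relation is reflexive -- $\st$ results from $\st$ by adding the empty set of rules, so $\st \supseteq \st$ -- I would simply instantiate $\bb := \st$ and read off $\Gamma \Vdash_{\st} \Delta$.

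For the backward direction I would isolate the key persistence fact that the inference relation is monotone across base extensions: if $\Gamma \Vdash_{\bb} \Delta$ and $\bc \supseteq \bb$, then $\Gamma \Vdash_{\bc} \Delta$. Writing $\Gamma = \{A^1, \ldots, A^n\}$ and unfolding the goal $\Gamma \Vdash_{\bc} \Delta$ via clause~\ref{eq:supp-inf}, one must discharge an implication for every $\bd \supseteq \bc$ and every family $\{\Theta^i_{\At}\}$; but transitivity of $\supseteq$ gives $\bd \supseteq \bb$, so each such obligation already lies among those guaranteed by the hypothesis $\Gamma \Vdash_{\bb} \Delta$. The degenerate case $n = 0$ (empty antecedent), where the inference clause collapses to ordinary support, is exactly the statement of Lemma~\ref{lm:monotonicity}. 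Applying this persistence fact with $\bb := \st$ then lifts the single hypothesis $\Gamma \Vdash_{\st} \Delta$ to $\Gamma \Vdash_{\bc} \Delta$ for every $\bc \supseteq \st$, which is precisely $\Gamma \Vdash \Delta$.

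The argument is short because clause~\ref{eq:supp-inf} already bakes universal quantification over extensions into the definition of support; the substance of the theorem is thus that $\st$ is the least base under $\supseteq$, so testing support at $\st$ is equivalent to testing it at every base above $\st$. The only points requiring care are verifying that base extension is reflexive and transitive -- so that the quantifier over extensions of $\st$ subsumes the quantifiers over extensions of any $\bb \supseteq \st$ -- and the appeal to Lemma~\ref{lm:monotonicity} for the categorical case. I do not anticipate a genuine obstacle; the main risk is conflating the two readings of support with an empty antecedent, and monotonicity is exactly what reconciles them.
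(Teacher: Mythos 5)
Your proposal is correct and follows essentially the same route as the paper: the forward direction is the same instantiation $\bb := \st$, and the backward direction rests, as in the paper, on the fact that any $\bc \supseteq \bb \supseteq \st$ is itself an extension of $\st$ together with Lemma~\ref{lm:monotonicity} to reconcile the categorical case. The only difference is packaging -- you isolate the persistence of $\Gamma \Vdash_{\bb} \Delta$ under base extension as a standalone fact (observing that for nonempty $\Gamma$ transitivity of $\supseteq$ alone suffices), whereas the paper inlines the same reasoning directly into the verification of clause~\ref{eq:supp-inf}.
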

\begin{proof}
    ($\Rightarrow$) Immediate by Definition~\ref{def:validity}---since $\Gamma \Vdash_{\bb} \Delta$ holds for all $\bb \supseteq \st$, it holds for $\mathcal{ST}$ in particular.

    ($\Leftarrow$) Let $\Gamma = \{ A^{1}, \dots, A^{n}\}$ and $\Theta^{1}, \dots, \Theta^{n}$ be arbitrary contexts. Assume that $\Vdash_{\bb} \Theta^{1}, A^{1}, \dots, \Vdash_{\bb} \Theta^{n}, A^{n}$ for arbitrary $\bb \supseteq \st$. Then $\Vdash_{\bb} \Theta^{1}, \dots, \Theta^{n}, \Delta$ by~\ref{eq:supp-inf}. Further, by Lemma~\ref{lm:monotonicity}, $\Vdash_{\bc} \Theta^{1}, \dots, \Theta^{n}, \Delta$, as well as $\Vdash_{\bc} \Theta^{1}, A^{1}, \dots, \Vdash_{\bc} \Theta^{n}, A^{n}$, for arbitrary $\bc \supseteq \bb$. Then $\Gamma \Vdash_{\bb} \Delta$ by~\ref{eq:supp-inf} again, and since $\bb$ is arbitrary, we have that, by Definition~\ref{def:validity},  $\Gamma \Vdash \Delta$.\qed
\end{proof}

We will next demonstrate that, even though~\ref{eq:supp-inf} is defined via atomic sets, its action can be extended to arbitrary contexts.

\begin{lemma} \label{lemma:arbitrarycontexts}
    $\{A^{1},  \ldots , A^{n} \} \Vdash_{\bb} \Delta$ if and only if, for all
    $\bc \supseteq \bb$ and arbitrary sets of formulas $\Theta^{i}$, if for $1 \leq i \leq n$, $\Vdash_{\bc} \Theta^{i},A^{i}$, then $\Vdash_{\bc} \Theta^{1}, \ldots , \Theta^{n}, \Delta$.
\end{lemma}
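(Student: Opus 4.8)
The plan is to prove the two directions separately, treating the right-to-left implication as immediate and the left-to-right implication by an induction that strips logical structure off the contexts $\Theta^i$ one connective at a time.

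For the ($\Leftarrow$) direction, note that the atomic sets $\Theta_{\At}^i$ appearing in clause~\ref{eq:supp-inf} are a special case of arbitrary formula sets. Hence the quantification over all formula contexts $\Theta^i$ in the statement, once restricted to atomic $\Theta^i$, is exactly the condition defining $\{A^1,\ldots,A^n\} \Vdash_{\bb} \Delta$, so this direction needs no further argument.

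The substantive direction is ($\Rightarrow$). Assuming $\{A^1,\ldots,A^n\} \Vdash_{\bb} \Delta$, I would argue by induction on the total number of logical symbols (the connectives and the constant $\bot$) occurring in $\Theta^1,\ldots,\Theta^n$, keeping $\bb$, the $A^i$ and $\Delta$ fixed but quantifying over all $\bc \supseteq \bb$. The base case is when every $\Theta^i$ is atomic, which is precisely clause~\ref{eq:supp-inf}, i.e.\ the assumption itself. For the inductive step, some $\Theta^i$ contains a compound formula; by the symmetry of the indices we may assume it is $\Theta^1$, write $\Theta^1 = \{F\} \cup \Theta^{1\prime}$, and split on the principal connective of $F$, using the matching support clause both to decompose the hypothesis $\Vdash_{\bc} \Theta^1, A^1$ and to recompose the goal $\Vdash_{\bc} \Theta^1,\ldots,\Theta^n,\Delta$. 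In the $\land$, $\lor$ and $\bot$ cases (clauses~\ref{eq:supp-and}, \ref{eq:supp-or}, \ref{eq:supp-bot}) this is routine: decomposing $F$ replaces $\Theta^1$ by one or two contexts of strictly smaller complexity (e.g.\ $\{C,D\}\cup\Theta^{1\prime}$ for $F = C \lor D$), the inductive hypothesis applies to these, and the resulting supports are reassembled by the same clause.

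The implication case is the main obstacle. Here clause~\ref{eq:supp-imp} rewrites $\Vdash_{\bc} (C\to D), \Theta^{1\prime}, A^1$ as the left-support $C \Vdash_{\bc} D, \Theta^{1\prime}, A^1$, and the goal $\Vdash_{\bc}(C\to D),\Theta^{1\prime},\Theta^2,\ldots,\Theta^n,\Delta$ likewise becomes $C \Vdash_{\bc} D,\Theta^{1\prime},\Theta^2,\ldots,\Theta^n,\Delta$. To verify the latter I would unfold~\ref{eq:supp-inf}: take an arbitrary $\bd \supseteq \bc$ and atomic $\Phi_{\At}$ with $\Vdash_{\bd}\Phi_{\At}, C$; applying the hypothesis $C \Vdash_{\bc} D, \Theta^{1\prime}, A^1$ at $\bd$ yields $\Vdash_{\bd}\Phi_{\At}, D, \Theta^{1\prime}, A^1$, while Lemma~\ref{lm:monotonicity} transports $\Vdash_{\bc}\Theta^i, A^i$ for $i \geq 2$ up to $\bd$. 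The key point is that setting $\tilde\Theta^1 := \{D\}\cup\Theta^{1\prime}\cup\Phi_{\At}$ lowers the complexity measure — the atoms of $\Phi_{\At}$ contribute nothing and $C\to D$ has been replaced by the simpler $D$ — so the inductive hypothesis applies at $\bd$ to $\tilde\Theta^1,\Theta^2,\ldots,\Theta^n$ and delivers exactly $\Vdash_{\bd}\Phi_{\At}, D,\Theta^{1\prime},\Theta^2,\ldots,\Theta^n,\Delta$. Since $\bd$ and $\Phi_{\At}$ were arbitrary, this establishes $C\Vdash_{\bc} D,\Theta^{1\prime},\Theta^2,\ldots,\Theta^n,\Delta$, and~\ref{eq:supp-imp} converts it back to the desired goal. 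I expect the care needed in choosing the complexity measure (counting logical symbols only, so that the fresh atomic context $\Phi_{\At}$ does not block the induction) and in invoking monotonicity at the new base $\bd$ to be the only delicate points.
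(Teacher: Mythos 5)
Your proposal is correct and follows essentially the same route as the paper's own proof: the ($\Leftarrow$) direction is dispatched by specialisation to atomic contexts, and the ($\Rightarrow$) direction proceeds by induction on the total complexity of the contexts (atoms counting zero), with the implication case handled by exactly the paper's maneuver—passing to an arbitrary extension $\bd \supseteq \bc$ with a fresh atomic context, transporting the remaining hypotheses via Lemma~\ref{lm:monotonicity}, and applying the induction hypothesis to the context where $C \to D$ is replaced by $D$ together with the fresh atoms. Your explicit remarks that the measure must ignore the fresh atomic context and must count $\bot$ as a symbol are precisely the points the paper's degree-based formulation relies on implicitly.
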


\begin{proof}
We need to prove that $\{A^{1},  \ldots , A^{n} \} \Vdash_{\bb} \Delta$ if and only if, for all $\bc \supseteq \bb$ and arbitrary sets of formulas $\Theta^{i}$, if $\Vdash_{\bc} \Theta^{i},A^{i}$  for all $1\leq i\leq n$,  then $\Vdash_{\bc} \Theta^{1}, \ldots , \Theta^{n}, \Delta$. \\
($\Leftarrow$): Immediate by~\ref{eq:supp-inf}---since the statement holds for arbitrary sets of formulas $\Theta^{i}$, it in particular holds for all atomic sets $\Theta^{i}_{\At}$. \\
($\Rightarrow$): We prove the result by induction on the degree of $\left\{\Theta^{i}\right\}_{1\leq i\leq n}$, understood as the sum of the degrees of all formulas occurring in all sets in the collection.

Base case: The degree of each $\Theta^{i}$ is $0$, so all sets only contain atoms. Then the result follows immediately by~\ref{eq:supp-inf}, as it is defined via atomic sets.

If the degree of $\Theta^{1}, \ldots , \Theta^{n}$ is not $0$, then there exists some $k$ ($1 \leq k \leq n$) such that $\Theta^{k}$ contains some complex formula $F$. Without loss of generality, let $\Theta^{k} = \{ \Omega, F \}$. Therefore, it is left to be shown that, for an arbitrary $\bc \supseteq \bb$, if $\Vdash_{\bc} \Theta^{1},A^{1}, \dots, \Vdash_{\bc} \Omega, F, A^{k}, \dots, \Vdash_{\bc} \Theta^{n},A^{n}$ then $\Vdash_{\bc} \Theta^{1}, \ldots , \Omega, F, \ldots, \Theta^{n}, \Delta$; as the induction hypothesis, take that this statement holds for any extension of $\bb$ and any strict subformula of $F$ in place of $F$. We prove the result inductively on $F$.

\begin{description}[itemsep=0.2em]
        \item[$F = A \land B:$] we have that $\Vdash_{\bc} \Omega, A \land B, A^{k}$. We obtain $\Vdash_{\bc} \Omega, A, A^{k}$ and $\Vdash_{\bc} \Omega, B, A^{k}$ by~\ref{eq:supp-and}. The induction hypothesis thus yields $\Vdash_{\bc} \Theta^{1}, \ldots , \Omega, A, \ldots, \Theta^{n}, \Delta$ and $\Vdash_{\bc} \Theta^{1}, \ldots , \Omega, B, \ldots, \Theta^{n}, \Delta$. By~\ref{eq:supp-and}, we obtain $\Vdash_{\bc} \Theta^{1}, \ldots , \Omega, A \land B, \ldots, \Theta^{n}, \Delta$ as required.

        \item[$F = A \lor B:$] we have that $\Vdash_{\bc} \Omega, A \lor B, A^{k}$. By~\ref{eq:supp-or} we have $\Vdash_{\bc} \Omega, A, B, A^{k}$. The induction hypothesis yields $\Vdash_{\bc} \Theta^{1}, \ldots , \Omega, A, B, \ldots, \Theta^{n}, \Delta$, so by~\ref{eq:supp-or} we obtain $\Vdash_{\bc} \Theta^{1}, \ldots , \Omega, A \lor B, \ldots, \Theta^{n}, \Delta$ as required.

        \item[$F = A \to B:$] we have that $\Vdash_{\bc} \Omega, A \to B, A^{k}$. By~\ref{eq:supp-imp} we obtain $A \Vdash_{\bc} \Omega, B, A^{k}$. Let $\Sigma_{\At}$ be an arbitrary atomic set and pick an arbitrary $\bd \supseteq \bc$ such that $\Vdash_{\bd} \Sigma_{\At}, A$. Since also $A \Vdash_{\bc} \Omega, B, A^{k}$ and $\bd \supseteq \bc$, we obtain $\Vdash_{\bd} \Sigma_{\At}, \Omega, B, A^{k}$ by~\ref{eq:supp-inf}. By monotonicity we have that $\Vdash_{\bd} \Theta^{j}, A^{j}$ (for $1 \leq j \leq n, j \neq k$), which together with $\Vdash_{\bd} \Sigma_{\At}, \Omega, B, A^{k}$ yield $\Vdash_{\bd} \Theta^{1}, \ldots, \Sigma_{\At}, \Omega, B, \ldots, \Theta^{n}, \Delta$ by the induction hypothesis (as $\bd \supseteq \bb$). Since also $\bd \supseteq \bc$ such that $\Vdash_{\bd} \Sigma_{\At}, A$ for arbitrary $\Sigma_{\At}$, we obtain $A \Vdash_{\bc} \Theta^{1}, \ldots , \Omega, B, \ldots, \Theta^{n}, \Delta$ by~\ref{eq:supp-inf}. By~\ref{eq:supp-imp}, then, $\Vdash_{\bc} \Theta^{1}, \ldots , \Omega, A \to B, \ldots, \Theta^{n}, \Delta$ as required.

        \item[$F = \bot:$] we have that $\Vdash_{\bc} \Omega, \bot, A^{k}$. By~\ref{eq:supp-bot} we obtain $\Vdash_{\bc} \Omega, A^{k}$. The induction hypothesis yields $\Vdash_{\bc} \Theta^{1}, \ldots , \Omega, \ldots, \Theta^{n}, \Delta$, so we get $\Vdash_{\bc} \Theta^{1}, \ldots , \Omega, \bot, \ldots, \Theta^{n}, \Delta$ by~\ref{eq:supp-bot} as required.\qed
        
 \end{description}       
\end{proof}

Finally, we present the admissibility of structural rules.
\begin{lemma}[Right/left weakening] \label{lemma:weak}
    If $\Vdash_{\bb} \Gamma$ then $\Vdash_{\bb} \Gamma, A$. If $\Gamma \Vdash_{\bb} \Delta$ then $A, \Gamma \Vdash_{\bb} \Delta$.
\end{lemma}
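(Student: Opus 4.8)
The plan is to establish the two claims in order, deriving left weakening from right weakening together with Monotonicity (Lemma~\ref{lm:monotonicity}) and Lemma~\ref{lemma:arbitrarycontexts}. The subtlety for right weakening is that the semantic clauses decompose the \emph{succedent} rather than the formula being inserted, so I would not induct directly on $A$. Instead I first isolate the atomic case as an auxiliary statement, proven by induction on the degree of the context $\Gamma$ and \emph{uniformly over all bases}, and only then lift to arbitrary $A$ by induction on the degree of $A$.

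Concretely, the auxiliary claim reads: for every base $\bb$, atom $p$ and context $\Gamma$, if $\Vdash_{\bb} \Gamma$ then $\Vdash_{\bb} \Gamma, p$. In the base case $\Gamma = \gat$ is atomic, so by \ref{eq:supp-at} it suffices to show that $\vdash_{\bb} \Rightarrow \gat$ implies $\vdash_{\bb} \Rightarrow \gat, p$; this is the admissibility of atomic right weakening for $\vdash_{\bb}$, which follows by a routine induction on the generation of $\vdash_{\bb}$ (in the \textbf{(Axiom/Weakening)} case enlarge $\Sigma_{\At}$ by $p$, and in the \textbf{(Mix)} case add $p$ to one premise via the induction hypothesis). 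For complex $\Gamma$ I decompose one of its complex formulas using the matching clause \ref{eq:supp-and}, \ref{eq:supp-or} or \ref{eq:supp-bot}, apply the induction hypothesis to the strictly smaller resulting contexts, and recompose. The delicate case is $\Gamma = C \to D, \Gamma'$: by \ref{eq:supp-imp} the goal becomes $C \Vdash_{\bb} D, \Gamma', p$, so I fix $\bc \supseteq \bb$ and atomic $\Theta$ with $\Vdash_{\bc} \Theta, C$ (unfolding via \ref{eq:supp-inf}), obtain $\Vdash_{\bc} \Theta, D, \Gamma'$ from the hypothesis, and adjoin $p$ by the induction hypothesis applied at base $\bc$ to the context $\Theta, D, \Gamma'$. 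Since the freshly introduced $\Theta$ is atomic, the degree of $\Theta, D, \Gamma'$ is strictly below that of $C \to D, \Gamma'$, which is exactly why the induction must be on the context degree and uniform in the base.

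With the atomic case available, right weakening for arbitrary $A$ follows by induction on the degree of $A$: the case $A = \bot$ is immediate from \ref{eq:supp-bot}; for $A = C \land D$ use \ref{eq:supp-and} and the induction hypothesis on $C$ and $D$; for $A = C \lor D$ use \ref{eq:supp-or} and adjoin $C$ then $D$; and for $A = C \to D$, \ref{eq:supp-imp} reduces the goal to $C \Vdash_{\bb} D, \Gamma$, which I obtain by fixing $\bc \supseteq \bb$ and atomic $\Theta$, using Lemma~\ref{lm:monotonicity} to get $\Vdash_{\bc} \Gamma$, the atomic case to adjoin the atoms of $\Theta$, and the induction hypothesis on $D$ to adjoin $D$ (the hypothesis $\Vdash_{\bc} \Theta, C$ is discarded).

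Left weakening is then comparatively short. Writing $\Gamma = \{A^{1}, \dots, A^{n}\}$ and invoking Lemma~\ref{lemma:arbitrarycontexts} to work with arbitrary side contexts, I fix $\bc \supseteq \bb$ and arbitrary $\Theta^{0}, \Theta^{1}, \dots, \Theta^{n}$ with $\Vdash_{\bc} \Theta^{0}, A$ and $\Vdash_{\bc} \Theta^{i}, A^{i}$ for $1 \le i \le n$. Instantiating $\Gamma \Vdash_{\bb} \Delta$ at $\bc$ with $\Theta^{1}, \dots, \Theta^{n}$ (again through Lemma~\ref{lemma:arbitrarycontexts}) gives $\Vdash_{\bc} \Theta^{1}, \dots, \Theta^{n}, \Delta$, and repeatedly applying the already-proven right weakening to adjoin the formulas of $\Theta^{0}$ yields $\Vdash_{\bc} \Theta^{0}, \Theta^{1}, \dots, \Theta^{n}, \Delta$; the hypothesis $\Vdash_{\bc} \Theta^{0}, A$ is simply unused. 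Hence $A, \Gamma \Vdash_{\bb} \Delta$. The main obstacle throughout is the implication clause, where weakening must be driven into the succedent of a nested support judgment; the key to keeping the induction well-founded is that the context $\Theta$ introduced by \ref{eq:supp-inf} is atomic, so running the atomic subinduction on context degree rather than on $A$ makes the measure strictly decrease.
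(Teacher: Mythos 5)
Your proof is correct, and at its core it runs on the same engine as the paper's own: the base case is discharged through \ref{eq:supp-at} plus the fact that Definition~\ref{def:derivability} already admits arbitrary atomic contexts on the right; complex succedent formulas are decomposed through the invertible clauses \ref{eq:supp-and}, \ref{eq:supp-or}, \ref{eq:supp-bot}; and the implication case is handled by unfolding \ref{eq:supp-inf} with an atomic side context, which is exactly what keeps the degree measure decreasing. The organizational difference is that the paper runs a \emph{single} induction on the complexity of the whole set $\Gamma, A$, disposing of the case where $A$ itself is the complex formula with an informal symmetry remark (``it does not make a difference whether to consider $\Gamma$ or $A$ to be non-atomic''), whereas you split the argument into two stages: first atomic $A$ by induction on the degree of $\Gamma$, then arbitrary $A$ by induction on its own degree. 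Your staging is arguably the more rigorous rendering, because decomposing $A$ (which occurs only in the conclusion) is not literally symmetric to decomposing a member of $\Gamma$ (which occurs in both hypothesis and conclusion): in your stage-two implication case the hypothesis $\Vdash_{\bb} \Gamma$ must be transported to the extension $\bc$ via Lemma~\ref{lm:monotonicity}, a step the paper's written cases never need since they always decompose a formula of $\Gamma$ and can lean on the nested \ref{eq:supp-inf} hypothesis instead. Your left-weakening argument is likewise the paper's --- discard the hypothesis on $A$ and pad the succedent using the already-proven right weakening --- except that you route it through Lemma~\ref{lemma:arbitrarycontexts} with arbitrary side contexts where the paper works directly with atomic contexts via \ref{eq:supp-inf}; both are sound (and non-circular, since Lemma~\ref{lemma:arbitrarycontexts} does not depend on weakening), the paper's being marginally leaner, yours requiring one fewer appeal to the atomicity of the contexts supplied by \ref{eq:supp-inf}.
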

\begin{proof}[of Lemma~\ref{lemma:weak}]
Right weakening: We will prove that  if $\Vdash_{\bb} \Gamma$ then $\Vdash_{\bb} \Gamma, A$.
     
    For the base case, consider both $\Gamma$ and $A$ to be atomic, {\ie} the following clause with $\Gamma = \Gamma_{\At}$, $A = p$ for an arbitrary atom $p$: $$\text{if } \Vdash_{\bb} \Gamma_{\At} \text{ then } \Vdash_{\bb} \Gamma_{\At}, p.$$
    By~\ref{eq:supp-at}, we obtain $\vdash_{\bb} \seq\Gamma_{\At}$ from $\Vdash_{\bb} \Gamma_{\At}$. Since the definition of derivability (Definition~\ref{def:derivability}) allows inclusion of arbitrary contexts on the right, we conclude $\vdash_{\bb} \seq\Gamma_{\At}, p$, and, by~\ref{eq:supp-at} again, we obtain $\Vdash_{\bb} \Gamma_{\At}, p$ as desired.

    For the inductive case, notice that it does not make a difference whether to consider $\Gamma$ or $A$ to be non-atomic, as in either case, one would end up with \textit{some} non-atomic formula on the right, be it a member of $\Gamma$ or $A$ itself. What truly matters is the ability to break such a formula down into its strict subformulas until, eventually, one ends up solely with atoms on the right.

    As an induction hypothesis, thus, take that $$\text{if } \Vdash_{\bb} \Gamma \text{ then } \Vdash_{\bb} \Gamma, A$$ holds true with any base in place of $\bb$ and any proper subset of $\Gamma$ in place of $\Gamma$.

    \begin{description}[itemsep=0.2em]
        \item[$\Gamma = \Gamma', B \land C:$] we have that $\Vdash_{\bb} \Gamma', B \land C$. By~\ref{eq:supp-and}, obtain $\Vdash_{\bb} \Gamma', B$ and $\Vdash_{\bb} \Gamma', C$. By the induction hypothesis, $\Vdash_{\bb} \Gamma', B, A$ and $\Vdash_{\bb} \Gamma', C, A$, and by~\ref{eq:supp-and} again, $\Vdash_{\bb} \Gamma', B \land C, A$.

        \item[$\Gamma = \Gamma', B \lor C:$] we have that $\Vdash_{\bb} \Gamma', B \lor C$. By~\ref{eq:supp-or}, obtain $\Vdash_{\bb} \Gamma', B, C$. By the induction hypothesis, $\Vdash_{\bb} \Gamma', B, C, A$, and by~\ref{eq:supp-or} again, $\Vdash_{\bb} \Gamma', B \lor C, A$. 
    
        \item[$\Gamma = \Gamma', B \to C:$] we have that $\Vdash_{\bb} \Gamma', B \to C$. By~\ref{eq:supp-imp}, obtain $B \Vdash_{\bb} \Gamma', C$. Let $\Theta_{\At}$ be an arbitrary atomic set and pick an arbitrary $\bc \supseteq \bb$ such that $\Vdash_{\bc} B, \Theta_{\At}$. Since also $B \Vdash_{\bb} \Gamma', C$, we obtain $\Vdash_{\bc} \Theta_{\At}, \Gamma', C$ by~\ref{eq:supp-inf}. By the induction hypothesis, $\Vdash_{\bc} \Theta_{\At}, \Gamma', C, A$. Since also $\bc \supseteq \bb$ such that $\Vdash_{\bc} B, \Theta_{\At}$ for arbitrary $\Theta_{\At}$, we obtain $B \Vdash_{\bb} \Gamma', C, A$ by~\ref{eq:supp-inf}. By~\ref{eq:supp-imp}, $\Vdash_{\bb} \Gamma', B \to C, A$.

        \item[$\Gamma = \Gamma', \bot:$] we have that $\Vdash_{\bb} \Gamma', \bot$. By~\ref{eq:supp-bot}, obtain $\Vdash_{\bb} \Gamma'$. By the induction hypothesis, $\Vdash_{\bb} \Gamma', A$, and by~\ref{eq:supp-bot} again, $\Vdash_{\bb} \Gamma', \bot, A$. 
    \end{description}

For the left weakening, we need to prove that if $\Gamma \Vdash_{\bb} \Delta$ then $A, \Gamma \Vdash_{\bb} \Delta$. Without loss of generality, let $\Gamma = \{ A^{1}, \dots, A^{n}\}$ and $\Theta^{1}_{\At}, \dots, \Theta^{n}_{\At}, \Omega_{\At}$ be arbitrary atomic sets. Pick an arbitrary $\bc \supseteq \bb$ such that ${\Vdash_{\bc} \Theta^{1}_{\At}, A^{1}, \dots, \Vdash_{\bc} \Theta^{n}_{\At}, A^{n}}$ and ${\Vdash_{\bc} \Omega_{\At}, A}$. By~\ref{eq:supp-inf}, from $\Gamma \Vdash_{\bb} \Delta$ and ${\Vdash_{\bc} \Theta^{1}_{\At}, A^{1}, \dots, \Vdash_{\bc} \Theta^{n}_{\At}, A^{n}}$ we obtain ${\Vdash_{\bc} \Theta^{1}_{\At}, \dots, \Theta^{n}_{\At}, \Delta}$. Application of weakening with $\Omega$ on the right yields $\Vdash_{\bc} \Theta^{1}_{\At}, \dots, \Theta^{n}_{\At}, \Delta, \Omega_{\At}$. Since also $\bc \supseteq \bb$ such that $\Vdash_{\bc} \Theta^{1}_{\At}, A^{1}, \dots, \Vdash_{\bc} \Theta^{n}_{\At}, A^{n}$ and $\Vdash_{\bc} \Omega_{\At}, A$ for arbitrary $\Theta^{1}_{\At}, \dots, \Theta^{n}_{\At}, \Omega_{\At}$, we obtain $A, \Gamma \Vdash_{\bb} \Delta$ by~\ref{eq:supp-inf} as required. \qed
\end{proof}

\begin{lemma}[Right/left contraction]\label{lemma:cont}
    If $\Vdash_{\bb} \Gamma, A, A$ then $\Vdash_{\bb} \Gamma, A$.  If \\$\Gamma, A, A \Vdash_{\bb} \Delta$ then $\Gamma, A \Vdash_{\bb} \Delta$. 
\end{lemma}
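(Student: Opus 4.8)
The plan is to prove both statements by induction, reducing each to the admissibility of contraction on the atomic derivability relation $\vdash_\bb$, which should follow from the form of the (Mix) rule together with closure under $\ainit$ and the structural base definitions.

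\paragraph{Right contraction.}
First I would handle right contraction, $\Vdash_\bb \Gamma, A, A \Rightarrow \Vdash_\bb \Gamma, A$, by induction on the degree of $A$ (or on the structure of the multiset $\Gamma, A, A$, peeling off the non-atomic formulas of $\Gamma$ using the semantic clauses). In each inductive step I unfold the outermost connective of the contracted formula $A$ using the corresponding clause from Definition~\ref{def:support} and push the contraction inward. For $A = B \land C$: from $\Vdash_\bb B\land C, B\land C, \Gamma$, clause~\ref{eq:supp-and} gives (after reshuffling, justified by the multiset reading of contexts) support of each of $B, B$ and $C, C$ together with $\Gamma$ and the other conjunct; the induction hypothesis on the strictly smaller $B$ and $C$ then yields $\Vdash_\bb B\land C, \Gamma$. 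The cases $A = B \lor C$ and $A = B \to C$ are similar, using~\ref{eq:supp-or} and~\ref{eq:supp-imp}: disjunction flattens the repeated $B\lor C$ into $B, C, B, C, \Gamma$, where two applications of the IH (on $B$ and on $C$) suffice, while implication moves $B$ to the left and contracts $C$ on the right via the IH, together with the left-contraction clause handled below. The case $A = \bot$ uses~\ref{eq:supp-bot} to erase the two copies and reduce directly to $\Vdash_\bb \Gamma$, from which right weakening (Lemma~\ref{lemma:weak}) reinstates a single $\bot$.

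\paragraph{The atomic base case.}
The base of the induction is the genuinely load-bearing step: when $A = p$ is atomic and $\Gamma$ is atomic, \ref{eq:supp-at} reduces the claim to showing that $\vdash_\bb \ \Rightarrow \Gamma_\At, p, p$ implies $\vdash_\bb \ \Rightarrow \Gamma_\At, p$. I expect this to be the main obstacle, since $\vdash_\bb$ is generated inductively by (Axiom/Weakening) and (Mix) rather than being closed under contraction by fiat. The strategy is a secondary induction on the derivation of $\ \Rightarrow \Gamma_\At, p, p$ in $\bb$. In the axiom/weakening case the duplicated $p$ lies in the weakening context $\Sigma_\At$, so dropping one copy is immediate. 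In the (Mix) case the two occurrences of $p$ are distributed among the succedent contexts $\Sigma^i_\At$ and $\dat$; I apply the IH to each premise carrying a copy and then reassemble via the same (Mix) instance, checking that the merged succedent now contains only a single $p$. Crucially, since every system we consider is closed under $\ainit$, the problematic interaction where a contracted atom is principal in a cut can be absorbed; in the cut-closed systems $\acut$ gives additional slack, and in $\hs$ the absence of cut only makes the derivation shape simpler.

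\paragraph{Left contraction.}
For left contraction, $\Gamma, A, A \Vdash_\bb \Delta \Rightarrow \Gamma, A \Vdash_\bb \Delta$, I would appeal to Lemma~\ref{lemma:arbitrarycontexts}, which lets me use arbitrary (not merely atomic) context formulas in the~\ref{eq:supp-inf} clause. Writing $\Gamma = \{A^1,\dots,A^m\}$, the hypothesis says that supporting each of $A^1,\dots,A^m, A, A$ (with side contexts) yields support of the combined succedent. Given a single copy of $A$ supported with context $\Theta$, I feed $\Theta$ in for the first copy and the \emph{empty} context for the second (using right weakening, Lemma~\ref{lemma:weak}, to see that $\Vdash_\bc \Theta, A$ still supports a nominally separate occurrence of $A$), so that the two copies are witnessed by the same data; the merged conclusion then collapses to the one required for $\Gamma, A \Vdash_\bb \Delta$. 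The degenerate arrangement of contexts is legitimate precisely because \ref{eq:supp-inf} quantifies universally over the side contexts $\Theta^i$, so instantiating two of them to share a witness is permitted, and right weakening guarantees the reused witness genuinely supports both slots.
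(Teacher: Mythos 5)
Your proposal misses the single fact on which the paper's proof rests: in this paper contexts are \emph{sets} of formulas, not multisets (the paper fixes ``sets of atoms'' $\Gamma_{\At},\Delta_{\At}$ from the outset, Definition~\ref{def:support} takes $X_{\At}\subseteq\At$ for sets $X$, and the presentation of $\CLp$ explicitly stresses that ``the rules regard sets of formulas''). Consequently $\Gamma, A, A$ and $\Gamma, A$ denote the \emph{same} set, and the paper's entire proof of right contraction is exactly that one-line observation. Your multiset-style induction is therefore not just unnecessary; under the multiset reading you implicitly adopt, the ``load-bearing'' atomic step you identify is actually false, and your secondary induction on derivations has a concrete hole: you only treat the cases where the duplicated $p$ lies in the weakening context $\Sigma_{\At}$ or is spread across (Mix) premises, but both occurrences of $p$ can sit inside the conclusion $\Delta_{\At}$ of a base rule itself, which is fixed and immune to the induction hypothesis. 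For instance, take $\bb$ to be $\hs$ extended with an atomic axiom whose conclusion is $\Rightarrow p, p$ (read as a multiset): cut-free derivable sequents are precisely weakenings of base axioms and of $\ainit$ instances, and $\Rightarrow p$ is neither (weakening cannot delete an occurrence, and $\ainit$ instances have non-empty antecedents), so atomic right contraction is not admissible there. Only the set reading saves the statement, and under that reading there is nothing to prove.

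Your left-contraction argument has the right overall shape --- reuse one witness for both copies of $A$ in clause~\ref{eq:supp-inf} --- but the instantiation is wrong. Feeding the \emph{empty} context for the second copy obliges you to supply $\Vdash_{\bc} A$, and this does not follow from $\Vdash_{\bc} \Theta_{\At}, A$: right weakening (Lemma~\ref{lemma:weak}) goes from $\Vdash_{\bc} A$ to $\Vdash_{\bc} \Theta_{\At}, A$, not conversely, so your parenthetical justification is backwards (it is a strengthening, not a weakening, that you would need). The paper instead instantiates \emph{both} copies of $A$ with the same atomic context $\Omega_{\At}$ and the same witness $\Vdash_{\bc} \Omega_{\At}, A$; clause~\ref{eq:supp-inf} then yields $\Vdash_{\bc} \Theta^{1}_{\At}, \dots, \Theta^{n}_{\At}, \Omega_{\At}, \Omega_{\At}, \Delta$, and $\Omega_{\At}, \Omega_{\At}$ collapses to $\Omega_{\At}$ because, again, these are sets. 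With that correction your argument becomes the paper's; as written, both halves of the proposal contain genuine gaps.
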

\begin{proof}
Right contraction: If $\Vdash_{S} \Gamma, A, A$ then $\Vdash_{S} \Gamma, A$ follows directly from the fact that $\Gamma, A, A$ and $\Gamma, A$ are the same set of formulas.

For the left contraction, we need to prove that  if $\Gamma, A, A \Vdash_{\bb} \Delta$ then $\Gamma, A \Vdash_{\bb} \Delta$. Without loss of generality, let $\Gamma = \{ A^{1}, \dots, A^{n}\}$ and $\Theta^{1}_{\At}, \dots, \Theta^{n}_{\At}, \Omega_{\At}$ be arbitrary atomic sets. Choose an arbitrary $\bc \supseteq \bb$ such that $\Vdash_{\bc} \Theta^{1}_{\At}, A^{1}, \dots, \Vdash_{\bc} \Theta^{n}_{\At}, A^{n}$ and $\Vdash_{\bc} \Omega_{\At}, A$. From $\Gamma, A, A \Vdash_{\bb} \Delta$ and $\Vdash_{\bc} \Theta^{1}_{\At}, A^{1}, \dots, \Vdash_{\bc} \Theta^{n}_{\At}, A^{n}$ and $\Vdash_{\bc} \Omega_{\At}, A$ applied twice we obtain $\Vdash_{\bc} \Theta^{1}_{\At}, \dots, \Theta^{n}_{\At}, \Omega_{\At}, \Omega_{\At}, \Delta$ by~\ref{eq:supp-inf}. Application of contraction on $\Omega_{\At}$ on the right yields $\Vdash_{\bc} \Theta^{1}_{\At}, \dots, \Theta^{n}_{\At}, \Omega_{\At}, \Delta$. Since also $\bc \supseteq \bb$ such that $\Vdash_{\bc} \Theta^{1}_{\At}, A^{1}, \dots, \Vdash_{\bc} \Theta^{n}_{\At}, A^{n}$ and $\Vdash_{\bc} \Omega_{\At}, A$ for arbitrary $\Theta^{1}_{\At}, \dots, \Theta^{n}_{\At}, \Omega_{\At}$, we obtain $\Gamma, A \Vdash_{\bb} \Delta$ by~\ref{eq:supp-inf} as required. \qed
\end{proof}

\section{Soundness}
We shall consider the sequent system $\CLp$  for classical propositional logic with rules depicted in Fig.~\ref{fig:CL}.
It should be highlighted that, although the presentation is non-standard---adopting a multiplicative style for inference rules instead of additive rules---the rules regard {\em sets of formulas}.
\begin{figure}[t]
\begin{center} 
\begin{bprooftree}
    \AxiomC{$A,B,\Gamma \Rightarrow \Delta$}
    \RightLabel{\scriptsize{$L \land$}}
    \UnaryInfC{$A \land B, \Gamma \Rightarrow \Delta$}
\end{bprooftree}
\quad
\begin{bprooftree}
    \AxiomC{$\Gamma \Rightarrow  \Delta, A$}
    \AxiomC{$\Gamma' \Rightarrow \Delta', B$}
    \RightLabel{\scriptsize{$R \land$}}
    \BinaryInfC{$\Gamma, \Gamma' \Rightarrow \Delta, \Delta', A \land B$}
\end{bprooftree}

\end{center}
   
\begin{center}

\begin{bprooftree}
    \AxiomC{$A, \Gamma \Rightarrow  \Delta$}
    \AxiomC{$B, \Gamma' \Rightarrow  \Delta'$}
    \RightLabel{\scriptsize{$L \lor$}}
    \BinaryInfC{$A \lor B, \Gamma, \Gamma' \Rightarrow \Delta, \Delta'$}
\end{bprooftree}
\quad
\begin{bprooftree}
    \AxiomC{$\Gamma \Rightarrow \Delta, A, B$}
    \RightLabel{\scriptsize{$R \lor$}}
    \UnaryInfC{$\Gamma \Rightarrow \Delta, A \lor B$}
\end{bprooftree}
\end{center}

\begin{center}
\begin{bprooftree}
    \AxiomC{$\Gamma \Rightarrow  \Delta, A$}
    \AxiomC{$B, \Gamma' \Rightarrow \Delta'$}
    \RightLabel{\scriptsize{$L \to$}}
    \BinaryInfC{$A \to B, \Gamma, \Gamma' \Rightarrow \Delta, \Delta'$}
\end{bprooftree}
\quad
\begin{bprooftree}
    \AxiomC{$A, \Gamma \Rightarrow \Delta, B$}
    \RightLabel{\scriptsize{$R \to$}}
    \UnaryInfC{$\Gamma \Rightarrow \Delta, A \to B$}
\end{bprooftree}
\end{center}
\begin{center}
\begin{bprooftree}
    \AxiomC{}
    \RightLabel{$\init$}
    \UnaryInfC{$\Gamma, A \Rightarrow A, \Delta$}
\end{bprooftree}
\quad
\begin{bprooftree}
    \AxiomC{}
    \RightLabel{\scriptsize{$L \bot$}}
    \UnaryInfC{$\Gamma, \bot \Rightarrow \Delta$ }
\end{bprooftree}
\quad
\begin{bprooftree}
    \AxiomC{$\Gamma \Rightarrow \Delta$}
    \RightLabel{\scriptsize{$R \bot$}}
    \UnaryInfC{$\Gamma \Rightarrow \bot,\Delta$ }
\end{bprooftree}
\caption{Sequent system $\CLp$.}\label{fig:CL}
\end{center}
\end{figure}

This section is devoted to proving that $\CLp$ is {\em sound} with respect to our semantics; that is, every provable sequent is semantically valid. 
This result follows from the fact that the semantic relation $\Vdash$ preserves all inference rules of $\CLp$.

\begin{theorem}[Soundness] \label{thm:soundness}
    If there is a proof of $\Gamma \seq \Delta$ in $\CLp$ then $\Gamma \Vdash \Delta$.
\end{theorem}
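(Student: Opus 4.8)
The plan is to proceed by induction on the height of the given derivation in $\CLp$, showing that the semantic consequence relation $\Vdash$ is closed under every rule of the system; equivalently, I verify that each zero-premise rule ($\init$, $L\bot$) yields a valid sequent, and that validity of the premises of each inference transfers to its conclusion. By Definition~\ref{def:validity}, establishing $\Gamma\Vdash\Delta$ amounts to establishing $\Gamma\Vdash_\bb\Delta$ for every $\bb\supseteq\st$, so I fix such a $\bb$ and, writing $\Gamma=\{A^1,\ldots,A^n\}$, invoke Lemma~\ref{lemma:arbitrarycontexts} to reduce the goal to the following: for every $\bc\supseteq\bb$ and arbitrary context sets $\Theta^1,\ldots,\Theta^n$ with $\Vdash_\bc\Theta^i,A^i$ for each $i$, one has $\Vdash_\bc\Theta^1,\ldots,\Theta^n,\Delta$. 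The premises, being valid, hold at $\bc$ and may be unfolded the same way. This ``arbitrary context'' reading is exactly what matches the multiplicative context splitting of the rules on the semantic side, since contexts are \emph{sets} and may be freely shared or merged.

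For the axioms, $\init$ is handled using the context $\Theta^0$ attached to the displayed $A$: from $\Vdash_\bc\Theta^0,A$, right weakening (Lemma~\ref{lemma:weak}) supplies the remaining antecedent contexts and the side succedent $\Delta$, giving the required support. For $L\bot$, the context $\Theta^\bot$ attached to $\bot$ satisfies $\Vdash_\bc\Theta^\bot,\bot$, which by clause~\ref{eq:supp-bot} gives $\Vdash_\bc\Theta^\bot$; right weakening then produces the full succedent. The right rules are the routine inductive cases: in each I apply the matching invertible clause to the principal succedent formula and reduce to the premise(s). Concretely, $R\lor$ and $R\bot$ rewrite $A\lor B$ (resp. $\bot$) via~\ref{eq:supp-or} (resp.~\ref{eq:supp-bot}) and invoke the premise through Lemma~\ref{lemma:arbitrarycontexts}; $R\land$ splits the goal through~\ref{eq:supp-and} into the two premises, closing each with right weakening to absorb the other branch's contexts; and $R\to$ rewrites $A\to B$ via~\ref{eq:supp-imp}, introduces an extension $\bd\supseteq\bc$ and a context $\Omega$ with $\Vdash_\bd\Omega,A$, transports the premise's context hypotheses to $\bd$ by Monotonicity (Lemma~\ref{lm:monotonicity}), and applies the premise with $\Omega$ as the context for $A$. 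The left rule $L\land$ is equally direct: clause~\ref{eq:supp-and} turns $\Vdash_\bc\Theta,A\land B$ into $\Vdash_\bc\Theta,A$ and $\Vdash_\bc\Theta,B$, and applying the premise with the single set $\Theta$ serving as the context for both $A$ and $B$ (set idempotence collapsing $\Theta,\Theta$ to $\Theta$) yields the conclusion.

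The genuinely delicate cases, and the ones I expect to be the main obstacle, are $L\lor$ and $L\to$, where the two premises must be chained in a cut-like fashion. For $L\lor$, clause~\ref{eq:supp-or} gives $\Vdash_\bc\Theta,A,B$ from $\Vdash_\bc\Theta,A\lor B$; I first apply the left premise $A,\Gamma\Vdash\Delta$ taking $\Theta\cup\{B\}$ as the context for $A$ (legitimate since $\Vdash_\bc(\Theta\cup\{B\}),A$ is literally $\Vdash_\bc\Theta,A,B$), obtaining $\Vdash_\bc\Theta,B,\bigcup_j\Xi^j,\Delta$; reading this as $B$ supported over the enlarged context, I then apply the right premise $B,\Gamma'\Vdash\Delta'$ with that whole set as the context for $B$, which delivers the target $\Vdash_\bc\Theta,\bigcup_j\Xi^j,\bigcup_l\Psi^l,\Delta,\Delta'$. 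The case $L\to$ follows the same pattern: clause~\ref{eq:supp-imp} turns $\Vdash_\bc\Theta,A\to B$ into $A\Vdash_\bc\Theta,B$; the left premise $\Gamma\Vdash\Delta,A$ yields $\Vdash_\bc\bigcup_j\Xi^j,\Delta,A$, feeding this into $A\Vdash_\bc\Theta,B$ (via Lemma~\ref{lemma:arbitrarycontexts}, with $\Omega=\bigcup_j\Xi^j\cup\Delta$) produces $\Vdash_\bc\Theta,\bigcup_j\Xi^j,\Delta,B$, and a final application of the right premise $B,\Gamma'\Vdash\Delta'$ with that set as the context for $B$ closes the case.

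The subtlety throughout $L\lor$ and $L\to$ is the bookkeeping that justifies folding succedent formulas into antecedent contexts and back: this is precisely where the semantic counterpart of cut is being exercised, and checking that set-based contexts keep every such step well defined (in particular that the combined contexts carry the right support, with Monotonicity covering any passage to an extension) is the crux of the argument. Once all rules are verified, the induction closes and $\Gamma\Vdash\Delta$ follows for any $\CLp$-derivable $\Gamma\seq\Delta$.
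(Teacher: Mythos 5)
Your proposal is correct and follows essentially the same route as the paper: a rule-by-rule preservation argument that unfolds validity via the support clauses, discharges $\init$, $L\bot$, $L\land$ and the right rules with weakening, contraction (set idempotence) and monotonicity, and handles the two-premise left rules $L\lor$ and $L\to$ by the same cut-like chaining through Lemma~\ref{lemma:arbitrarycontexts}. The only differences are cosmetic -- you hoist Lemma~\ref{lemma:arbitrarycontexts} to the front so as to work with arbitrary contexts uniformly, where the paper works with atomic contexts and invokes that lemma pointwise, and in $L\lor$ you chain the premises in the opposite order -- neither of which changes the substance of the argument.
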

\begin{proof}[of Theorem~\ref{thm:soundness}]
In what follows, we assume without loss of generality that $\Gamma = { A^{1}, \dots, A^{n} }$ and $\Gamma' = { B^{1}, \dots, B^{n} }$. We use $\Theta^{1}_{\At}, \dots, \Theta^{n}_{\At},, \Sigma^{1}_{\At}, \dots, \Sigma^{n}_{\At}, \Omega_{\At}$ to denote arbitrary atomic contexts. We also make use of Theorem~\ref{thm:validity} throughout the proof, {\ie} that $\Gamma \Vdash \Delta$ and $\Gamma \Vdash_{\st} \Delta$ are equivalent.

\begin{description}[itemsep=0.5em, font=\normalfont]

\item[(Ax)'.] Take any $\bb$ and an arbitrary $\bc \supseteq \bb$ such that $\Vdash_{\bc} A, \Omega_{\At}$ and $\Vdash_{\bc} A^{i}, \Theta^{i}_{\At}$ for all $A^{i} \in \Gamma$. Since $\Vdash_{\bc} A, \Omega_{\At}$, by applying right weakening (Lemma \ref{lemma:cont}) a sufficient amount of times we obtain $\Vdash_{\bc} A, \Omega_{\At}, \Theta^{1}_{\At}, \ldots, \Theta^{n}_{\At}, \Delta$. Since also $\bc \supseteq \bb$ such that $\Vdash_{\bc} A, \Omega_{\At}$ and $\Vdash_{\bc} A^{i}, \Theta^{i}_{\At}$ for arbitrary $\Theta^{i}_{\At}, \Omega_{\At}$, we conclude $\Gamma, A \Vdash_{\bb} A, \Delta$ by~\ref{eq:supp-inf}, so by arbitrariness of $\bb$ also $\Gamma, A \Vdash A, \Delta$.

\item[($L \land$)'.] Assume $A,B,\Gamma \Vdash \Delta$ and, for an arbitrary $\bb \supseteq \st$, that $\Vdash_{\bb} \Theta^{i}_{\At}, A^{i}$ for all $A^{i} \in \Gamma$ and $\Vdash_{\bb} \Omega_{\At}, A \land B$. Then, by~\ref{eq:supp-and} we obtain $\Vdash_{\bb} \Omega_{\At}, A$ and $\Vdash_{\bb} \Omega_{\At}, B$. Now, $A,B,\Gamma \Vdash_{\mathcal{ST}} \Delta$ together with $\Vdash_{\bb} \Theta^{1}_{\At}, A^{1}, \dots, \Vdash_{\bb} \Theta^{n}_{\At}, A^{n}$ and $\Vdash_{\bb} \Omega_{\At}, A$ and $\Vdash_{\bb} \Omega_{\At}, B$ gives us $\Vdash_{\bb} \Theta^{1}_{\At}, \dots, \Theta^{n}_{\At}, \Omega_{\At}, \Delta$ by~\ref{eq:supp-inf} and right contraction on $\Omega_{\At}$ (Lemma~\ref{lemma:cont}). Since also $\bb \supseteq \st$ such that $\Vdash_{\bb} \Theta^{i}_{\At}, A^{i}$ for all $A^{i} \in \Gamma$ and $\Vdash_{\bb} \Omega_{\At}, A \land B$ for arbitrary $\Theta^{i}_{\At}, \Omega_{\At}$, we conclude $A \land B, \Gamma \Vdash_{\mathcal{ST}} \Delta$ by~\ref{eq:supp-inf}, hence $A \land B, \Gamma \Vdash \Delta$ as required. 

\item[($R \land$)'.] Assume that $\Gamma \Vdash  \Delta, A$ and $\Gamma' \Vdash \Delta', B$ and, for an arbitrary $\bb \supseteq \st$, that $\Vdash_{\bb} \Theta^{i}_{\At}, A^{i}$ for all $A^{i} \in \Gamma$ and $\Vdash_{\bb} \Sigma^{i}_{\At}, B^{i}$ for all $B^{i} \in \Gamma'$. We then, respectively, conclude $\Vdash_{\bb} \Theta^{1}_{\At}, \dots, \Theta^{n}_{\At}, \Delta, A$ and $\Vdash_{\bb} \Sigma^{1}_{\At}, \dots, \Sigma^{n}_{\At}, \Delta', B$ by~\ref{eq:supp-inf}. By right weakening (Lemma~\ref{lemma:weak}), we obtain $\Vdash_{\bb} \Theta^{1}_{\At}, \dots, \Theta^{n}_{\At}, \Sigma^{1}_{\At}, \dots, \Sigma^{n}_{\At}, \Delta, \Delta', A$ and $\Vdash_{\bb} \Theta^{1}_{\At}, \dots, \Theta^{n}_{\At}, \Sigma^{1}_{\At}, \dots, \Sigma^{n}_{\At}, \Delta, \Delta', B$, from which, by~\ref{eq:supp-and}, we further conclude $\Vdash_{\bb} \Theta^{1}_{\At}, \dots, \Theta^{n}_{\At}, \Sigma^{1}_{\At}, \dots, \Sigma^{n}_{\At}, \Delta, \Delta', A \land B$. Since also $\bb \supseteq \st$ such that $\Vdash_{\bb} \Theta^{i}_{\At}, A^{i}$ for all $A^{i} \in \Gamma$ and $\Vdash_{\bb} \Sigma^{i}_{\At}, B^{i}$ for all $B^{i} \in \Gamma'$ for arbitrary $\Theta^{i}_{\At}, \Sigma^{i}_{\At}$, we obtain $\Gamma, \Gamma' \Vdash_{\mathcal{ST}} \Delta, \Delta', A \land B$ by~\ref{eq:supp-inf}, hence $\Gamma, \Gamma' \Vdash \Delta, \Delta', A \land B$ as required. 

\item[($L \lor$)'.] Assume that $\Gamma, A \Vdash \Delta$ and $\Gamma', B \Vdash \Delta'$ and, for an arbitrary $\bb \supseteq \st$, that $\Vdash_{\bb} \Theta^{i}_{\At}, A^{i}$ for all $A^{i} \in \Gamma$, $\Vdash_{\bb} \Sigma^{i}_{\At}, B^{i}$ for all $B^{i} \in \Gamma'$ and $\Vdash_{\bb} \Omega_{\At}, {A \lor B}$. We conclude $\Vdash_{\bb} \Omega_{\At}, A, B$ by~\ref{eq:supp-or}. Now, $\Gamma', B \Vdash_{\mathcal{ST}} \Delta'$ and $\Vdash_{\bb} \Omega_{\At}, A, B$ together with $\Vdash_{\bb} \Sigma^{1}_{\At}, B^{1}, \dots, \Vdash_{\bb} \Sigma^{n}_{\At}, B^{n}$ give us $\Vdash_{\bb} \Sigma^{1}_{\At}, \dots,\Sigma^{n}_{\At}, \Omega_{\At}, A, \Delta'$ by Lemma~\ref{lemma:arbitrarycontexts}. Since also $\Gamma, A \Vdash_{\mathcal{ST}}\Delta$ and $\Vdash_{\bb} \Theta^{1}_{\At}, A^{1}, \dots, \Vdash_{\bb} \Theta^{n}_{\At}, A^{n}$, we obtain $\Vdash_{\bb} \Theta^{1}_{\At}, \dots, \Theta^{n}_{\At}, \Sigma^{1}_{\At}, \dots, \Sigma^{n}_{\At}, \Omega_{\At}, \Delta', \Delta$ by Lemma~\ref{lemma:arbitrarycontexts}. On top of that, since $\bb \supseteq \st$ such that $\Vdash_{\bb} \Theta^{i}_{\At}, A^{i}$ for all $A^{i} \in \Gamma$, $\Vdash_{\bb} \Sigma^{i}_{\At}, B^{i}$ for all $B^{i} \in \Gamma'$ and $\Vdash_{\bb} \Omega_{\At}, A \lor B$ for arbitrary $\Theta^{i}_{\At}, \Sigma^{i}_{\At}, \Omega_{\At}$, we obtain $\Gamma, \Gamma', A \lor B \Vdash_{\mathcal{ST}} \Delta, \Delta'$ by~\ref{eq:supp-inf}, hence $\Gamma, \Gamma', A \lor B \Vdash \Delta, \Delta'$ as required. 

\item[($R \lor$)'.] Assume that $\Gamma \Vdash \Delta, A, B$ and, for an arbitrary $\bb \supseteq \st$, that $\Vdash_{\bb} \Theta^{i}_{\At}, A^{i}$ for all $A^{i} \in \Gamma$. We then get $\Vdash_{\bb} \Theta^{1}_{\At}, \dots, \Theta^{n}_{\At}, \Delta, A, B$ by~\ref{eq:supp-inf}. Then, by~\ref{eq:supp-or} we obtain $\Vdash_{\bb} \Theta^{1}_{\At}, \dots, \Theta^{n}_{\At}, \Delta, A \lor B$. 
Since also $\bb \supseteq \st$ such that $\Vdash_{\bb} \Theta^{i}_{\At}, A^{i}$ for all $A^{i} \in \Gamma$ for arbitrary $\Theta^{i}_{\At}$, we conclude $\Gamma \Vdash_{\mathcal{ST}} \Delta, A \lor B$ by~\ref{eq:supp-inf}, hence $\Gamma \Vdash \Delta, A \lor B$ as required. 

\item[($L \to$)'.] Assume that $\Gamma \Vdash  \Delta, A$ and $B, \Gamma' \Vdash \Delta'$ and, for an arbitrary ${\bb \supseteq \st}$, that $\Vdash_{\bb} \Theta^{i}_{\At}, A^{i}$ for all $A^{i} \in \Gamma$, $\Vdash_{\bb} \Sigma^{i}_{\At}, B^{i}$ for all $B^{i} \in \Gamma'$ and $\Vdash_{\bb} \Omega_{\At}, A \to B$. Then, by~\ref{eq:supp-imp}, we have $A \Vdash_{\bb} \Omega_{\At}, B$. Now, by~\ref{eq:supp-inf}, from $\Gamma \Vdash_{\mathcal{ST}} \Delta, A$ and $\Vdash_{\bb} \Theta^{1}_{\At}, A^{1}, \dots, \Vdash_{\bb} \Theta^{n}_{\At}, A^{n}$ we obtain $\Vdash_{\bb} \Theta^{1}_{\At}, \dots, \Theta^{n}_{\At}, \Delta, A$. Since also $A \Vdash_{\bb} \Omega_{\At}, B$, we conclude $\Vdash_{\bb} \Theta^{1}_{\At}, \dots, \Theta^{n}_{\At}, \Delta, \Omega_{\At}, B$ by Lemma~\ref{lemma:arbitrarycontexts}. Since we also have $B, \Gamma' \Vdash_{\mathcal{ST}} \Delta'$ and $\Vdash_{\bb} \Sigma^{1}_{\At}, B^{1}, \dots, \Vdash_{\bb} \Sigma^{n}_{\At}, B^{n}$, we conclude ${\Vdash_{\bb} \Theta^{1}_{\At}, \dots, \Theta^{n}_{\At}, \Sigma^{1}_{\At}, \dots, \Sigma^{n}_{\At}, \Omega_{\At}, \Delta, \Delta'}$ by Lemma~\ref{lemma:arbitrarycontexts} again. On top of that, since ${\bb \supseteq \st}$ such that $\Vdash_{\bb} \Theta^{i}_{\At}, A^{i}$ for all $A^{i} \in \Gamma$, $\Vdash_{\bb} \Sigma^{i}_{\At}, B^{i}$ for all $B^{i} \in \Gamma'$ and $\Vdash_{\bb} \Omega_{\At}, A \to B$ for arbitrary $\Theta^{i}_{\At}, \Sigma^{i}_{\At}, \Omega_{\At}$, we get ${\Gamma, \Gamma', A \to B \Vdash_{\mathcal{ST}} \Delta, \Delta'}$ by~\ref{eq:supp-inf}, hence $\Gamma, \Gamma', A \to B \Vdash \Delta, \Delta'$ as required. 

\item[($R \to$)'.] Assume that $A, \Gamma \Vdash B, \Delta$ and, for an arbitrary $\bb \supseteq \st$, that ${\Vdash_{\bb} \Theta^{i}_{\At}, A^{i}}$ for all $A^{i} \in \Gamma$. Further, pick an arbitrary $\bc \supseteq \bb$ such that $\Vdash_{\bc} \Omega_{\At}, A$. We then get $\Vdash_{\bc} \Theta^{1}_{\At}, \dots, \Theta^{n}_{\At}, \Omega_{\At}, B, \Delta$ by~\ref{eq:supp-inf} and monotonicity (Lemma~\ref{lm:monotonicity}). Since also $\bc \supseteq \bb$ such that $\Vdash_{\bc} \Omega_{\At}, A$, we obtain $A \Vdash_{\mathcal{B}} \Theta^{1}_{\At}, \dots, \Theta^{n}_{\At}, B, \Delta$ by~\ref{eq:supp-inf}. By~\ref{eq:supp-imp} we then obtain ${\Vdash_{\bb} \Theta^{1}_{\At}, \dots, \Theta^{n}_{\At}, A \to B, \Delta}$. Since also ${\bb \supseteq \st}$ such that $\Vdash_{\bb} \Theta^{i}_{\At}, A^{i}$ for all $A^{i} \in \Gamma$ for arbitrary $\Theta^{i}_{\At}$, we obtain $\Gamma \Vdash_{\mathcal{ST}} A \to B, \Delta$ by~\ref{eq:supp-inf}, hence $\Gamma \Vdash A \to B, \Delta$ as required. 

\item[($L \bot$)'.] Take any $\bb$ and any $\bc \supseteq \bb$ such that $\Vdash_{\bc} \bot, \Omega_{\At}$ and $\Vdash_{\bc} A^{i}, \Theta^{i}_{\At}$ for all ${A^{i} \in \Gamma}$. Since $\Vdash_{\bc} \bot, \Omega_{\At}$, by~\ref{eq:supp-bot} we get $\Vdash_{\bc} \Omega_{\At}$, so by applying right weakening (Lemma~\ref{lemma:weak}) a sufficient amount of times we obtain $\Vdash_{\bc} \Omega_{\At}, \Theta^{1}_{\At}, \ldots, \Theta^{n}_{\At}, \Delta$. Since also $\bc \supseteq \bb$ such that $\Vdash_{\bc} A^{i}, \Theta^{i}_{\At}$ for all $A^{i} \in \Gamma$ and $\Vdash_{\bc} \bot, \Omega_{\At}$ for arbitrary $\Theta^{i}_{\At}, \Omega_{\At}$, we conclude $\Gamma, \bot\Vdash_{\bb} \Delta$ by~\ref{eq:supp-inf}, so by arbitrariness of $\bb$ also $\Gamma, \bot \Vdash \Delta$. 

\item[($R \bot$)'.] Assume that $\Gamma \Vdash \Delta$ and, for an arbitrary $\bb \supseteq \st$, that ${\Vdash_{\bb} \Theta^{i}_{\At}, A^{i}}$ for all $A^{i} \in \Gamma$. We then get $\Vdash_{\bb} \Theta^{1}_{\At}, \dots, \Theta^{n}_{\At}, \Delta$ by~\ref{eq:supp-inf}. By~\ref{eq:supp-bot} we then conclude ${\Vdash_{\bb} \Theta^{1}_{\At}, \dots, \Theta^{n}_{\At}, \Delta, \bot}$. Further, since ${\bb \supseteq \st}$ such that $\Vdash_{\bb} \Theta^{i}_{\At}, A^{i}$ for all $A^{i} \in \Gamma$ for arbitrary $\Theta^{i}_{\At}$, we obtain $\Gamma \Vdash_{\mathcal{ST}} \Delta, \bot$ by~\ref{eq:supp-inf}, hence $\Gamma \Vdash \Delta, \bot$ as required. \qed 
\end{description}
\end{proof}

\section{Completeness}
In this section, we prove that $\CLp$ is complete with respect to the semantics; that is, every semantically valid sequent is provable. We begin be returning to Definition \ref{def:derivability} and proving a key lemma about composing derivable sequents.

\begin{lemma}\label{lemma:categoricalreducestohypothetical}
  Let $\bb$ be a base. If, for all $\bc \supseteq \bb$, $\vdash_{\bc} \ \Rightarrow \Gamma_{\At}$ implies $\vdash_{\bc} \ \Rightarrow \Delta_{\At}$, then $\vdash_{\bb} \Gamma_{\At} \Rightarrow \Delta_{\At}$.
\end{lemma}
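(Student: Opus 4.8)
The plan is to prove this ``categorical-to-hypothetical'' reduction by \emph{internalizing} the atoms of $\Gamma_{\At}$: I first build a specific extension $\bc$ of $\bb$ in which $\Rightarrow \Gamma_{\At}$ becomes derivable, then feed this into the hypothesis to get $\vdash_{\bc} \ \Rightarrow \Delta_{\At}$, and finally transform that $\bc$-derivation into a $\bb$-derivation of $\Gamma_{\At} \Rightarrow \Delta_{\At}$ by moving the atoms of $\Gamma_{\At}$ out of the succedent-axioms and into the antecedent. The key observation driving the whole argument is that a succedent-axiom $\Rightarrow q$, once $q$ is placed in the antecedent, is exactly an instance of $\ainit$, which is where closure under $\ainit$ is used.

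Concretely, writing $\Gamma_{\At} = \{q_1, \ldots, q_k\}$, I would set $\bc := \bb \cup \{(\Rightarrow q_i) \mid 1 \le i \le k\}$, adjoining each atom of $\Gamma_{\At}$ as a premise-free atomic axiom; clearly $\bc \supseteq \bb$. Applying the weakening built into the (Axiom/Weakening) clause to the axiom $\Rightarrow q_1$, padding the succedent with $q_2, \ldots, q_k$, yields $\vdash_{\bc} \ \Rightarrow \Gamma_{\At}$, so the hypothesis gives $\vdash_{\bc} \ \Rightarrow \Delta_{\At}$. The substance of the proof is then the following transformation claim, established by induction on the inductive definition of $\vdash_{\bc}$: for every atomic sequent $\Sigma_{\At} \Rightarrow \Pi_{\At}$, if $\vdash_{\bc} \Sigma_{\At} \Rightarrow \Pi_{\At}$ then $\vdash_{\bb} \Gamma_{\At}, \Sigma_{\At} \Rightarrow \Pi_{\At}$. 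In the (Axiom/Weakening) base case, if the axiom used lies in $\bb$ then the same clause delivers the desired $\bb$-sequent after weakening by $\Gamma_{\At}$; if instead it is one of the adjoined axioms $\Rightarrow q_i$, then $q_i$ occurs in the succedent while $q_i \in \Gamma_{\At} \subseteq \Gamma_{\At}, \Sigma_{\At}$ occurs in the antecedent, so $\Gamma_{\At}, \Sigma_{\At} \Rightarrow \Pi_{\At}$ is precisely an $\ainit$ instance and is derivable in $\bb$ by the standing closure assumption. In the (Mix) case the applied rule must belong to $\bb$, since we adjoined only premise-free axioms; I apply the induction hypothesis to each premise and then reapply the \emph{same} Mix step in $\bb$, treating the added $\Gamma_{\At}$ as part of the left weakening context, using that sequents are sets so that the several copies of $\Gamma_{\At}$ collapse to one. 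Instantiating the claim at $\Sigma_{\At} = \varnothing$, $\Pi_{\At} = \Delta_{\At}$ on $\vdash_{\bc} \ \Rightarrow \Delta_{\At}$ gives exactly $\vdash_{\bb} \Gamma_{\At} \Rightarrow \Delta_{\At}$.

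The main point to get right is conceptual rather than computational: one must pick the extension so that every adjoined axiom becomes eliminable, which is achieved precisely by adjoining the atoms of $\Gamma_{\At}$ themselves, so that their relocation to the antecedent turns each into an $\ainit$ axiom. This pinpoints why $\ainit$ is built into the framework. The remaining work, namely the bookkeeping in the Mix case and the set-level absorption of repeated $\Gamma_{\At}$, is routine, and I would take $\Gamma_{\At}$ to be nonempty, the degenerate empty case being separate and uninteresting.
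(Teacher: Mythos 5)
Your proof is correct and follows essentially the same strategy as the paper's: extend $\bb$ with premise-free atomic axioms that make $\Rightarrow \Gamma_{\At}$ derivable, invoke the hypothesis to get $\vdash \ \Rightarrow \Delta_{\At}$ in the extension, and then transform that derivation by relocating $\Gamma_{\At}$ into the antecedents, where closure under $\ainit$ turns every use of an adjoined axiom into a legitimate axiom of $\bb$. The differences are cosmetic only: the paper adjoins a single axiom $\Rightarrow \Gamma_{\At}$ rather than one axiom per atom, and presents the transformation as a graphical rewriting below new-rule applications rather than as your uniform induction on the derivation (your explicit flagging of the nonemptiness of $\Gamma_{\At}$ is, if anything, more careful than the paper).
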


\begin{proof}
    Assume that for all $\bc \supseteq \bb$, $\vdash_{\bb} \ \Rightarrow \Gamma_{\At}$ implies $\vdash_{\bc} \ \Rightarrow \Delta_{\At}$, and let $\bd$ be the base obtained by adding to $\bb$ an atomic axiom with conclusion $\Rightarrow \Gamma_{\At}$. Then we have $\vdash_{\bd} \ \Rightarrow \Gamma_{\At}$, so from our assumption follows  $\vdash_{\bd} \ \Rightarrow \Delta_{\At}$.  Now consider the deduction $\Pi$ showing  $\vdash_{\bd} \ \Rightarrow \Delta_{\At}$. If $\Pi$ does not contain any applications of the new rule it is also a deduction showing $\vdash_{\bb} \ \Rightarrow \Delta_{\At}$, so also $\vdash_{\bb} \Gamma_{\At} \Rightarrow \Delta_{\At}$ by Definition~\ref{def:derivability}. If $\Pi$ contains applications of the new rule, let $\Pi'$ be the deduction obtained by adding $\Gamma_{\At}$ to the left side of the conclusion of any application of the new rule and any rule application below it in the deduction. To exemplify it graphically, the transformation of $\Pi$ into $\Pi'$ is as follows:

\begin{center}
\begin{bprooftree}
    \AxiomC{}
    \UnaryInfC{$\Rightarrow \Gamma_{\At}$}
    \noLine
    \UnaryInfC{$\Pi$}
    \noLine
    \UnaryInfC{$ \Rightarrow \Delta_{\At}$}
\end{bprooftree}
is transformed into
\begin{bprooftree}
      \AxiomC{}
    \UnaryInfC{$\Gamma_{\At} \Rightarrow \Gamma_{\At}$}
    \noLine
    \UnaryInfC{$\Pi'$}
    \noLine
    \UnaryInfC{$\Gamma_{\At} \Rightarrow \Delta_{\At}$}
\end{bprooftree}
\end{center}
 
\noindent
It is straightforward to check that, since in $\Pi'$ every instance of the new rule is replaced by an atomic axiom of shape $\Gamma_{\At} \Rightarrow \Gamma_{\At}$ already contained in $\bb$ and the new context is also added to all inferences below each new atomic axiom, $\Pi'$ is the desired deduction showing $\vdash_{\bb} \Gamma_{\At} \Rightarrow \Delta_{\At}$.\qed
\end{proof}

Completeness in $\Bes$ is typically shown by associating a unique atom $p^A$ to each subformula $A$ of $\Gamma \cup \Delta$. Exploiting the validity of~$\Gamma \Vdash \Delta$ with respect to every base, we then construct a simulation base $\mathcal{U}$ for $\Gamma \cup \Delta$ in which each~$p^A$ mimics the behaviour of~$A$ in~$\CLp$.

\begin{definition}[Atomic mapping] Let $\Sigma$ be a set of formulas. Let $\Sigma_{S}$ be the set of all subformulas of formulas in $\Sigma$. We say that a function $\alpha: \Sigma_{S} \rightarrow \At$ is an \emph{atomic mapping} for $\Sigma$ if (1) $\alpha$ is injective, (2) $\alpha(p) = p$ for $p \in \At$. If $A\notin\At$, we denote $\alpha(A) \eqcolon p^A$.
\end{definition}

\begin{definition}[Simulation base]
Let $\Sigma$ be a set of formulas, and let $\alpha$ be an atomic mapping on $\Sigma$. A simulation base $\bu$ for $\Sigma$ and $\alpha$ is a base that includes exactly the rules in Fig.~\ref{fig:comp}, instantiated for all $A, B \in \Sigma$ and all multisets $\Gamma_{\At}, \Delta_{\At}, \Gamma_{\At}', \Delta_{\At}'$. We write $\ust$ (resp. $\uhs$) for the closure of $\bu$ under $\st$ (resp. $\hs$).
\end{definition}

\begin{figure}[t]
\[\infer[L \land]{p^{A \land B}, \Gamma_{\At} \Rightarrow \Delta_{\At}}{p^{A},p^{B},\Gamma_{\At} \Rightarrow \Delta_{\At}}
\quad 
\infer[R \land]{\Gamma_{\At}, \Gamma_{\At}' \Rightarrow \Delta_{\At}, \Delta_{\At}', p^{A \land B}}{\Gamma_{\At} \Rightarrow  \Delta_{\At}, p^{A}& \Gamma_{\At}' \Rightarrow \Delta_{\At}', p^{B}}
\]
\\
\[
\infer[L\lor]{p^{A \lor B}, \Gamma_{\At}, \Gamma_{\At}' \Rightarrow \Delta_{\At}, \Delta_{\At}'}{p^{A}, \Gamma_{\At} \Rightarrow  \Delta_{\At} & p^{B}, \Gamma_{\At}' \Rightarrow  \Delta_{\At}'}\quad
\infer[R \lor]{\Gamma_{\At} \Rightarrow \Delta_{\At}, p^{A \lor B}}{\Gamma_{\At} \Rightarrow \Delta_{\At}, p^{A}, p^{B}}
\quad
\infer[R \bot]{\Gamma_{\At} \Rightarrow  \Delta_{\At},p^{\bot}}{\Gamma_{\At} \Rightarrow \Delta_{\At}}
\]
\\
\[
\infer[L \to]{p^{A \to B}, \Gamma_{\At}, \Gamma_{\At}' \Rightarrow \Delta_{\At}, \Delta_{\At}'}{\Gamma_{\At} \Rightarrow  \Delta_{\At}, p^{A} & p^{B}, \Gamma_{\At}' \Rightarrow \Delta_{\At}'}
\quad 
\infer[R \to]{\Gamma_{\At} \Rightarrow \Delta_{\At}, p^{A \to B}}{p^{A}, \Gamma_{\At} \Rightarrow \Delta_{\At}, p^{B}}
\quad
\infer[L \bot]{\Gamma_{\At}, p^{\bot} \Rightarrow \Delta_{\At}}{}
\]
\caption{Rules of the simulation base $\bu$.}\label{fig:comp}
\end{figure}

We are now ready to present how the proof simulation works and prove the completeness result.
\begin{lemma}\label{mainlemma}
    Let $\Theta_{\At}$ be any (possibly empty) set of atoms, $\Sigma$ any (non-empty) set containing only subformulas of $\Gamma \cup \Delta$ and $\Sigma_{\At} = \{p^{B} | B \in \Sigma\}$.  Then, for all $\bb \supseteq \ust$, $\Vdash_{\bb} \Sigma, \Theta_{\At}$ if and only if $\vdash_{\bb} \ \Rightarrow \Sigma_{\At}, \Theta_{\At}$.
\end{lemma}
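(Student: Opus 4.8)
The plan is to prove both directions of the biconditional simultaneously by induction on the \emph{total degree} of $\Sigma$, i.e.\ the sum of the degrees of all formulas occurring in $\Sigma$, keeping the induction hypothesis universally quantified over $\Theta_{\At}$, over every base $\bb \supseteq \ust$, and over every smaller-degree set of subformulas of $\Gamma\cup\Delta$. The base case is when every formula in $\Sigma$ is atomic: then $\Sigma\subseteq\At$ and $\Sigma_{\At}=\Sigma$ because $\alpha$ fixes atoms, so the claimed equivalence is literally clause~\ref{eq:supp-at}. For the inductive step I would pick a non-atomic $C\in\Sigma$, set $\Sigma'=\Sigma\setminus\{C\}$, and case-split on the main connective of $C$, using at each point that the immediate subformulas of $C$ have strictly smaller degree so the IH applies to them.

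For the forward direction the pattern is uniform: unfold the semantic clause for $C$, feed the resulting smaller support statements to the IH to obtain derivable sequents over the proxies, and reassemble the target by one application of the matching \emph{right} simulation rule of Fig.~\ref{fig:comp} through the \textbf{(Mix)} clause of Definition~\ref{def:derivability}, where duplicated atomic contexts collapse since sequents are read over sets. For $C=A\land B$, clause~\ref{eq:supp-and} gives $\Vdash_{\bb}\Sigma',A,\Theta_{\At}$ and $\Vdash_{\bb}\Sigma',B,\Theta_{\At}$; the IH turns these into $\vdash_{\bb}\ \seq\Sigma'_{\At},p^{A},\Theta_{\At}$ and $\vdash_{\bb}\ \seq\Sigma'_{\At},p^{B},\Theta_{\At}$, and $R\land$ via (Mix) yields $\vdash_{\bb}\ \seq\Sigma'_{\At},p^{A\land B},\Theta_{\At}$. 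The cases $\lor$ and $\bot$ are analogous, using clauses~\ref{eq:supp-or} and~\ref{eq:supp-bot} with the rules $R\lor$ and $R\bot$.

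For the backward direction I would, in each case, first build an auxiliary \emph{decomposition} sequent from the corresponding \emph{left} simulation rule together with $\ainit$, and then strip off the proxy $p^{C}$ by an atomic cut against the hypothesis $\vdash_{\bb}\ \seq\Sigma'_{\At},p^{C},\Theta_{\At}$; this is precisely where closure under $\acut$ (that is, working over $\ust$ rather than $\uhs$) is indispensable. For $C=A\land B$ I would derive $p^{A\land B}\seq p^{A}$ (apply $L\land$ to the $\ainit$ sequent $p^{A},p^{B}\seq p^{A}$) and likewise $p^{A\land B}\seq p^{B}$, cut each against the hypothesis to get $\vdash_{\bb}\ \seq\Sigma'_{\At},p^{A},\Theta_{\At}$ and $\vdash_{\bb}\ \seq\Sigma'_{\At},p^{B},\Theta_{\At}$, and then close by the IH and clause~\ref{eq:supp-and}; for $\bot$ the decomposition sequent is simply the axiom $p^{\bot}\seq$ obtained from $L\bot$, and for $\lor$ it is $p^{A\lor B}\seq p^{A},p^{B}$.

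The delicate case, and the main obstacle, is implication, since clause~\ref{eq:supp-imp} yields the \emph{hypothetical} judgment $A\Vdash_{\bb}\Sigma',B,\Theta_{\At}$ rather than a categorical one. In the forward direction I would read this judgment through the (Inf) clause and Lemma~\ref{lemma:arbitrarycontexts}: for every $\bc\supseteq\bb$ with $\vdash_{\bc}\ \seq p^{A}$, the IH on the singleton $\{A\}$ gives $\Vdash_{\bc}A$, hence $\Vdash_{\bc}\Sigma',B,\Theta_{\At}$, and the IH again gives $\vdash_{\bc}\ \seq\Sigma'_{\At},p^{B},\Theta_{\At}$; Lemma~\ref{lemma:categoricalreducestohypothetical} then converts this into the categorical sequent $\vdash_{\bb}\ p^{A}\seq\Sigma'_{\At},p^{B},\Theta_{\At}$, and $R\to$ delivers $\vdash_{\bb}\ \seq\Sigma'_{\At},p^{A\to B},\Theta_{\At}$. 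In the backward direction, given atomic $\Theta'_{\At}$ with $\Vdash_{\bc}\Theta'_{\At},A$, the IH supplies $\vdash_{\bc}\ \seq p^{A},\Theta'_{\At}$; I would derive $p^{A\to B},p^{A}\seq p^{B}$ (by $L\to$ from two $\ainit$ sequents) and perform two cuts — first eliminating $p^{A\to B}$ against the hypothesis, then eliminating $p^{A}$ — to reach $\vdash_{\bc}\ \seq\Theta'_{\At},\Sigma'_{\At},p^{B},\Theta_{\At}$, whence the IH and clause~\ref{eq:supp-imp} (with Lemma~\ref{lemma:arbitrarycontexts}) give $A\Vdash_{\bb}\Sigma',B,\Theta_{\At}$. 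The only recurring routine check is that every reassembly and decomposition step respects the set-based reading of contexts, so that repeated atoms and the fixed $\Theta_{\At}$ behave correctly; the genuinely load-bearing ingredients are the availability of atomic cut and Lemma~\ref{lemma:categoricalreducestohypothetical}.
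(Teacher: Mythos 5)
Your proposal is correct and follows essentially the same route as the paper's proof: induction on the degree/complexity of the formula set, right simulation rules (via Mix, with set-collapse of duplicated contexts) for the forward direction, left simulation rules plus $\ainit$ and atomic cut for the backward direction, Lemma~\ref{lemma:categoricalreducestohypothetical} to discharge the hypothetical judgment in the forward implication case, and the same two-cut derivation in the backward implication case. The only cosmetic differences are that the paper measures complexity on $\Sigma \cup \Theta_{\At}$ rather than on $\Sigma$ alone (equivalent, as $\Theta_{\At}$ is atomic) and that your appeal to Lemma~\ref{lemma:arbitrarycontexts} in the implication cases is not actually needed, since the test contexts there are atomic and clause~\ref{eq:supp-inf} applies directly.
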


\begin{proof}
We prove the result by induction on the complexity of $\Sigma \cup \Theta_{\At}$, understood as the number of logical connectives occurring in $\Sigma \cup \Theta_{\At}$.
    \begin{enumerate}
        \item Base case: the complexity of $\Sigma \cup \Theta_{\At}$ is $0$. Then $\Sigma \cup \Theta_{\At}$ is just a set of atoms. Notice that, by virtue of how the mapping is defined, $\alpha(p) = p$ and $p^{p} = p$ for $p \in \At$, so clearly $\Sigma = \Sigma_{\At}$ and the results follow by~\ref{eq:supp-at}.
        \medskip

        If the complexity of $\Sigma \cup \Theta_{\At}$ is greater than $0$, then $\Sigma$ contains at least one formula of shape $A \land B$, $A \lor B$ or $A \to B$. We treat each case separately:

        \bigskip
        
        \item $\Sigma = A \land B \cup \Omega$. 

($\Rightarrow$): Assume $\Vdash_{\bb} A \land B, \Omega, \Theta_{\At}$. By~\ref{eq:supp-and} we have both $\Vdash_{\bb} A, \Omega, \Theta_{\At}$ and $\Vdash_{\bb}  B, \Omega, \Theta_{\At}$. Notice that, since $A \land B \cup \Omega$ is a set of subformulas of $\Gamma \cup \Delta$, both $A \cup \Omega$ and $B \cup \Omega$ are also sets of subformulas of $\Gamma \cup \Delta$. Induction hypothesis: $\vdash_{\bb} \ \Rightarrow p^{A}, \Omega_{\At}, \Theta_{\At}$ and $\vdash_{\bb} \ \Rightarrow p^{B}, \Omega_{\At}, \Theta_{\At}$. Then we can obtain the desired proof of $\vdash_{\bb} \Rightarrow p^{A \land B}, \Omega_{\At}, \Theta_{\At}$ as follows:

\bigskip

\begin{center}
\begin{bprooftree}
    \AxiomC{.}
    \noLine
    \UnaryInfC{.}
        \noLine
    \UnaryInfC{.}
        \noLine
    \UnaryInfC{$\Rightarrow p^{A}, \Omega_{\At}, \Theta_{\At}$}
        \AxiomC{.}
    \noLine
    \UnaryInfC{.}
        \noLine
    \UnaryInfC{.}
        \noLine
    \UnaryInfC{$\Rightarrow p^{B}, \Omega_{\At}, \Theta_{\At}$}
    \RightLabel{\scriptsize{$R \land$}}
    \BinaryInfC{$\Rightarrow p^{A \land B}, \Omega_{\At}, \Theta_{\At}, \Omega_{\At}, \Theta_{\At}$}
\end{bprooftree}
\end{center}

\bigskip

($\Leftarrow$): Assume $\vdash_{\bb} \ \Rightarrow p^{A \land B}, \Omega_{\At}, \Theta_{\At}$. First we do the following:

\bigskip

\begin{center}
\begin{bprooftree}
        \AxiomC{.}
    \noLine
    \UnaryInfC{.}
        \noLine
    \UnaryInfC{.}
        \noLine
    \UnaryInfC{$\Rightarrow p^{A \land B},\Omega_{\At}, \Theta_{\At}$}
\AxiomC{}
\RightLabel{\scriptsize{Axiom}}
\UnaryInfC{$p^{A} \Rightarrow p^{A}$}
\RightLabel{\scriptsize{$L \land$}}
\UnaryInfC{$p^{A \land B} \Rightarrow p^{A}$ }
\RightLabel{\scriptsize{Cut}}
\BinaryInfC{$\Rightarrow p^{A}, \Omega_{\At}, \Theta_{\At}$}
\end{bprooftree}

\bigskip
\bigskip
\bigskip

\begin{bprooftree}
        \AxiomC{.}
    \noLine
    \UnaryInfC{.}
        \noLine
    \UnaryInfC{.}
        \noLine
    \UnaryInfC{$\Rightarrow p^{A \land B}, \Omega_{\At}, \Theta_{\At}$}
\AxiomC{}
\RightLabel{\scriptsize{Axiom}}
\UnaryInfC{$p^{B} \Rightarrow p^{B}$}
\RightLabel{\scriptsize{$L \land$}}
\UnaryInfC{$p^{A \land B} \Rightarrow p^{B}$ }
\RightLabel{\scriptsize{Cut}}
\BinaryInfC{$\Rightarrow p^{B}, \Omega_{\At}, \Theta_{\At}$}
\end{bprooftree}
\end{center}

\bigskip

This shows we have both $\vdash_{\bb} \ \Rightarrow p^{A}, \Omega_{\At}, \Theta_{\At}$ and $\vdash_{\bb} \ \Rightarrow p^{B}, \Omega_{\At}, \Theta_{\At}$. Induction hypothesis: $\Vdash_{\bb} A, \Omega, \Theta_{\At}$ and $\Vdash_{\bb} B, \Omega, \Theta_{\At}$. Then by~\ref{eq:supp-and} we get $\Vdash_{\bb} A \land B, \Omega, \Theta_{\At}$.

\bigskip

\item $\Sigma = A \lor B \cup \Omega$.

\bigskip

($\Rightarrow$): Assume $\Vdash_{\bb} A \lor B, \Omega, \Theta_{\At}$. By~\ref{eq:supp-or} we have $\Vdash_{\bb} A, B, \Omega, \Theta_{\At}$. Notice that, since $A \lor B \cup \Omega$ is a subformula of $\Gamma \cup \Delta$, $A \cup B \cup \Omega$ is also a set of subformulas of $\Gamma \cup \Delta$. Induction hypothesis: $\vdash_{\bb} \ \Rightarrow p^{A}, p^{B}, \Omega_{\At}, \Theta_{\At}$. Then we can obtain a proof of $\vdash_{\bb} \ \Rightarrow p^{A \lor B}, \Omega_{\At}, \Theta_{\At}$ as follows:

\bigskip

\begin{center}
\begin{bprooftree}
    \AxiomC{.}
    \noLine
    \UnaryInfC{.}
        \noLine
    \UnaryInfC{.}
        \noLine
    \UnaryInfC{$\Rightarrow p^{A}, p^{B}, \Omega_{\At}, \Theta_{\At}$}
    \RightLabel{\scriptsize{$R \lor$}}
    \UnaryInfC{$\Rightarrow p^{A \lor B}, \Omega_{\At}, \Theta_{\At}$}
\end{bprooftree}
\end{center}

\bigskip

($\Rightarrow$): Assume $\vdash_{\bb} \ \Rightarrow p^{A \lor B}, \Omega_{\At}, \Theta_{\At}$. We do the following:

\bigskip

\begin{center}
\begin{bprooftree}
        \AxiomC{.}
    \noLine
    \UnaryInfC{.}
        \noLine
    \UnaryInfC{.}
        \noLine
    \UnaryInfC{$\Rightarrow p^{A \lor B}, \Omega_{\At}, \Theta_{\At}$}
\AxiomC{}
\RightLabel{\scriptsize{Axiom}}
\UnaryInfC{$p^{A} \Rightarrow p^{A}$}
\AxiomC{}
\RightLabel{\scriptsize{Axiom}}
\UnaryInfC{$p^{B} \Rightarrow p^{B}$}
\RightLabel{\scriptsize{$L \lor$}}
\BinaryInfC{$ p^{A \lor B} \Rightarrow p^{A}, p^{B}$}
\RightLabel{\scriptsize{Cut}}
\BinaryInfC{$\Rightarrow p^{A}, p^{B}, \Omega_{\At}, \Theta_{\At}$}
\end{bprooftree}
\end{center}

\bigskip

So we conclude $\vdash_{\bb} \ \Rightarrow p^{A}, p^{B}, \Omega_{\At}, \Theta_{\At}$. Then the induction hypothesis yields $\Vdash_{\bb} A, B, \Omega, \Theta_{\At}$, so by~\ref{eq:supp-or}, $\Vdash_{\bb} A \lor B, \Omega, \Theta_{\At}$.

\bigskip

\item $\Sigma = A \to B \cup \Omega$

\bigskip

($\Rightarrow$): Assume $\Vdash_{\bb} A \to B, \Omega, \Theta_{\At}$. Then by~\ref{eq:supp-imp} we have $A \Vdash_{\bb} B, \Omega, \Theta_{\At}$. Now pick any $\bc \supseteq \bb$ with $\vdash_{\bc} \Rightarrow p^{A}$. By induction hypothesis we have $\Vdash_{\bc} A$, so since $A \Vdash_{\bb} B, \Omega, \Theta_{\At}$ and $\bc \supseteq \bb$ we also have $\Vdash_{\bc} B, \Omega, \Theta_{\At}$, so the induction hypothesis yields $\vdash_{\bc} \Rightarrow p^{B}, \Omega_{\At}, \Theta_{\At}$. But then since $\bc$ is an arbitrary extension of $\bb$ with $\vdash_{\bc} \Rightarrow p^{A}$ by Lemma \ref{lemma:categoricalreducestohypothetical} we conclude $\vdash_{\bb} p^{A} \Rightarrow p^{B}, \Omega_{\At}, \Theta_{\At}$, and we can finally obtain the desired proof of $\vdash_{\bb} \Rightarrow p^{A \to B}, \Omega_{\At}, \Theta_{\At}$ as follows:

\bigskip

\begin{center}
\begin{bprooftree}
    \AxiomC{.}
    \noLine
    \UnaryInfC{.}
        \noLine
    \UnaryInfC{.}
        \noLine
    \UnaryInfC{$p^{A} \Rightarrow p^{B}, \Omega_{\At}, \Theta_{\At}$}
    \RightLabel{\scriptsize{$R \to$}}
    \UnaryInfC{$\Rightarrow p^{A \to B}, \Omega_{\At}, \Theta_{\At}$}
\end{bprooftree}
\end{center}

\bigskip

($\Leftarrow$):Assume $\vdash_{\bb} \ \Rightarrow p^{A \to B}, \Omega_{\At}, \Theta_{\At}$, and let $\bc$ be any $\bc\supseteq \bb$ with $\Vdash_{\bc} A, \Pi_{\At}$ for some set of atoms $\Pi_{\At}$. Then by induction hypothesis $\vdash_{\bc} \Rightarrow p^{A}, \Pi_{\At}$ and, since $\bc$ is an extension of $\bb$, we also have $\vdash_{\bc} \ \Rightarrow p^{A \to B},  \Omega_{\At}, \Theta_{\At}$. Then we can obtain the following deduction in $\bc$:

\bigskip

\begin{center}
\begin{bprooftree}
      \AxiomC{.}
    \noLine
    \UnaryInfC{.}
        \noLine
    \UnaryInfC{.}
        \noLine
    \UnaryInfC{$\Rightarrow p^{A}, \Pi_{\At}$}
    \AxiomC{.}
    \noLine
    \UnaryInfC{.}
        \noLine
    \UnaryInfC{.}
        \noLine
    \UnaryInfC{$\Rightarrow p^{A \to B},  \Omega_{\At}, \Theta_{\At}$}
    \AxiomC{}
    \RightLabel{\scriptsize{Axiom}}
    \UnaryInfC{$p^{A} \Rightarrow p^{A}$}
    \AxiomC{}
    \RightLabel{\scriptsize{Axiom}}
    \UnaryInfC{$p^{B} \Rightarrow p^{B}$}
    \BinaryInfC{$p^{A}, p^{A \to B} \Rightarrow p^{B}$}
    \RightLabel{\scriptsize{$L \to$}}
    \BinaryInfC{$p^{A} \Rightarrow p^{B},  \Omega_{\At}, \Theta_{\At}$}
    \RightLabel{\scriptsize{Cut}}
\BinaryInfC{$\Rightarrow p^{B},  \Omega_{\At}, \Theta_{\At}, \Pi_{\At}$}
\end{bprooftree}
\end{center}

\bigskip

So $\vdash_{\bc} \ \Rightarrow p^{B}, \Omega_{\At}, \Theta_{\At}, \Pi_{\At}$, and by induction hypothesis $\Vdash_{\bc} B,\Omega, \Theta_{\At}, \Pi_{\At}$ (as $\Theta_{\At} \cup \Pi_{\At}$ is still a set of atoms). But $\bc$ is an arbitrary extension of $\bb$ with $\Vdash_{\bc} A, \Pi_{\At}$ for an arbitrary set of atoms $\Pi_{\At}$, so we conclude $A \Vdash_{\bb} B, \Omega, \Theta_{\At}$ by~\ref{eq:supp-imp} and then $\Vdash_{\bb} A \to B, \Omega, \Theta_{\At}$.

\bigskip

\item $\Sigma = \bot \cup \Omega$

\bigskip

($\Rightarrow$): Assume $\Vdash_{\bb} \bot, \Omega, \Theta_{\At}$. Then by~\ref{eq:supp-bot} we have $\Vdash_{\bb} \Omega, \Theta_{\At}$. The induction hypothesis yields $\Vdash_{\bb} \Omega_{\At}, \Theta_{\At}$, whence $\vdash_{\bb} \Omega_{\At}, \Theta_{\At}$ so from the definition of derivability (Definition~\ref{def:derivability}) we obtain $\vdash_{\bb} p^{\bot}, \Omega_{\At}, \Theta_{\At}$, thus $\Vdash_{\bb} p^{\bot}, \Omega_{\At}, \Theta_{\At}$.

\bigskip

($\Leftarrow$): Assume $\Vdash_{\bb} p^{\bot}, \Omega_{\At}, \Theta_{\At}$. We put $\Gamma = \Delta = \varnothing$ in $L \bot$ and do the following:

\bigskip

\begin{center}
\begin{bprooftree}
      \AxiomC{.}
    \noLine
    \UnaryInfC{.}
        \noLine
    \UnaryInfC{.}
        \noLine
    \UnaryInfC{$\Rightarrow p^{\bot}, \Omega_{\At}, \Theta_{\At}$}
    \AxiomC{}
    \RightLabel{\scriptsize{$L \bot$}}
    \UnaryInfC{$p^{\bot} \Rightarrow \varnothing$}
        \RightLabel{\scriptsize{Cut}}
    \BinaryInfC{$\Rightarrow \Omega_{\At}, \Theta_{\At}$}
\end{bprooftree}
\end{center}

\bigskip

So $\vdash_{\mathcal{B}} \Omega_{\At}, \Theta_{\At}$. The induction hypothesis yields $\Vdash_{\mathcal{B}} \Omega, \Theta_{\At}$, thus ${\Vdash_{\mathcal{B}} \bot, \Omega, \Theta_{\At}}$ by~\ref{eq:supp-bot}. \qed        
    \end{enumerate}

\end{proof}

\begin{theorem}[Completeness]\label{thm:completeness}
If $\Gamma \Vdash \Delta$ then the sequent $\Gamma \Rightarrow \Delta$ is provable in $\CLp$.
\end{theorem}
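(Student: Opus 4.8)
The plan is to use the simulation base $\ust$ as a bridge between the semantics and the calculus: since validity quantifies over all bases extending $\st$ and $\ust \supseteq \st$, I can instantiate the hypothesis at $\ust$ and then read off a $\CLp$ derivation from an atomic derivation in $\ust$. Concretely, write $\Gamma = \{A^{1},\dots,A^{n}\}$, let $\Sigma$ be the set of all subformulas of $\Gamma \cup \Delta$, fix an atomic mapping $\alpha$ for $\Sigma$, and set $\gat = \{p^{A^{1}},\dots,p^{A^{n}}\}$ and $\dat = \{p^{B} \mid B \in \Delta\}$. The argument then splits into three stages: (i) obtain $\Gamma \Vdash_{\ust}\Delta$; (ii) turn this into the atomic derivability $\vdash_{\ust}\gat \Rightarrow \dat$; and (iii) project that derivation into a proof of $\Gamma \Rightarrow \Delta$ in $\CLp$.

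Stage (i) is immediate from Definition~\ref{def:validity}, since $\ust \supseteq \st$. For stage (ii), I would first establish the hypothetical statement that, for \emph{every} $\bc \supseteq \ust$, if $\vdash_{\bc}\ \Rightarrow p^{A^{i}}$ holds for each $1 \le i \le n$, then $\vdash_{\bc}\ \Rightarrow \dat$. This follows by chaining the two directions of the Main Lemma (Lemma~\ref{mainlemma}) with the inference clause~\ref{eq:supp-inf}: from $\vdash_{\bc}\ \Rightarrow p^{A^{i}}$ the Main Lemma (with $\Sigma = \{A^{i}\}$ and empty atomic context) yields $\Vdash_{\bc} A^{i}$; feeding these into $\Gamma \Vdash_{\ust}\Delta$ through~\ref{eq:supp-inf} (taking all $\Theta^{i}$ empty and $\bc$ as the extension) gives $\Vdash_{\bc}\Delta$; and the Main Lemma again (with $\Sigma = \Delta$) returns $\vdash_{\bc}\ \Rightarrow \dat$. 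Having this implication for arbitrary $\bc \supseteq \ust$, I would discharge the atomic hypotheses $p^{A^{1}},\dots,p^{A^{n}}$ onto the left of the sequent to obtain $\vdash_{\ust}\gat \Rightarrow \dat$. This is exactly the move performed in Lemma~\ref{lemma:categoricalreducestohypothetical}, except that several atoms are discharged simultaneously; the same construction applies, adding axioms $\Rightarrow p^{A^{i}}$ to $\ust$, replacing each use of such an axiom in the resulting derivation by the $\ainit$ instance $p^{A^{i}} \Rightarrow p^{A^{i}}$ (available since $\ust$ is closed under $\ainit$), and propagating the atoms to the left of every sequent below.

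Stage (iii) is the reflection of $\ust$ into $\CLp$. Reading $\alpha$ backwards, that is, interpreting each atom $p^{C}$ as the formula $C$ and genuine atoms as themselves, every rule of the simulation base in Fig.~\ref{fig:comp} becomes the corresponding right or left rule of $\CLp$ in Fig.~\ref{fig:CL}, each $\ainit$ instance becomes $\init$, and the structural steps (Axiom/Weakening and Mix of Definition~\ref{def:derivability}) become admissible weakening and context-merging in $\CLp$. Applying this translation to the derivation of $\gat \Rightarrow \dat$ then yields a derivation of $\Gamma \Rightarrow \Delta$.

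I expect the main obstacle to be the treatment of cut. Since the hypothesis is $\Gamma \Vdash \Delta$ (validity over $\st$, not the cut-free $\hs$), stage (i) forces the use of $\ust$ rather than $\uhs$, and derivations in $\ust$ -- already in the proof of the Main Lemma -- may contain atomic cuts $\acut$, which translate to cut in $\CLp$. As the system $\CLp$ of Fig.~\ref{fig:CL} has no cut rule, closing stage (iii) requires the admissibility of cut in $\CLp$ (standard for classical propositional sequent calculus), after which the projected derivation yields a genuine $\CLp$ proof. A secondary technical point is the simultaneous discharge of several atomic hypotheses in stage (ii) together with the verification that Mix and Weakening reflect to admissible $\CLp$ steps; finally, the boundary case $\Delta = \varnothing$, where the non-emptiness hypothesis of the Main Lemma fails, should be handled separately, for instance by inserting $\bot$ on the right and appealing to~\ref{eq:supp-bot}.
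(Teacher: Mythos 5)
Your proposal is correct and follows essentially the same route as the paper's own proof: instantiate validity at $\ust$, use Lemma~\ref{mainlemma} in both directions together with~\ref{eq:supp-inf} and the axiom-discharge transformation in the style of Lemma~\ref{lemma:categoricalreducestohypothetical} to obtain $\vdash_{\ust} \Gamma_{\At} \Rightarrow \Delta_{\At}$, then replace each $p^{B}$ by $B$ and appeal to cut admissibility of $\CLp$. The differences are only presentational: the paper builds the extended base $\bb$ and transforms its concrete derivation directly rather than first stating your quantified hypothetical, and your explicit flagging of the $\Delta = \varnothing$ boundary case and of the Mix/weakening translation covers details the paper passes over silently.
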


\begin{proof}
    Fix a simulation base $\ust$ with a mapping $\alpha$ for the set $\Gamma \cup \Delta$. Assume $\Gamma \Vdash \Delta$. Then $\Gamma \Vdash_{\ust} \Delta$. Now let $\bb$ be the system obtained by adding a rule concluding the sequent $\Rightarrow p^{B}$ from empty premises for all $B \in \Gamma$. So $\vdash_{\bb} \ \Rightarrow p^{B}$ for all $B \in \Gamma$, hence by Lemma \ref{mainlemma} we have $\Vdash_{\bb} B$ for all $B \in \Gamma$. From our assumption follows $\Vdash_{\bb} \Delta$, and from Lemma \ref{mainlemma} we once again have $\vdash_{\bb} \ \Rightarrow \Delta_{\At}$, so there must be a proof $\Pi$ in $\ust$ with conclusion $\Rightarrow \Sigma_{\At}$ for some $\Sigma_{\At} \subseteq \Delta_{\At}$.

    If $\Pi$ does not contain applications of the new rules added to $\bb$, it is already a deduction in the simulation base $\ust$, so by adding the context $\Gamma_{\At}$ to the left side of all sequents of the deduction we get a deduction $\Pi'$ with conclusion of shape $\Gamma_{\At} \Rightarrow \Delta_{\At}$. If it does contain applications of the new rules, we define a transformation similar to that of Lemma \ref{lemma:categoricalreducestohypothetical}. Let $\Pi'$ be the deduction obtained by adding to each application in $\Pi$ of one of the rules added to $\ust$ to obtain $\bb$ the context $\Gamma_{\At}$ on the left side of the sequent, as well as to all rule applications below them. The transformation can be represented graphically as follows:

    \bigskip
    \begin{center}
    
\begin{bprooftree}
    \AxiomC{}
    \UnaryInfC{$\Rightarrow p^{B}$}
    \noLine
    \UnaryInfC{$\Pi$}
    \noLine
    \UnaryInfC{$\Rightarrow \Delta_{\At}$}
\end{bprooftree}
is transformed into
\begin{bprooftree}
      \AxiomC{}
    \UnaryInfC{$\Gamma_{\At} \Rightarrow p^{B}$}
    \noLine
    \UnaryInfC{$\Pi'$}
    \noLine
    \UnaryInfC{$\Gamma_{\At} \Rightarrow \Delta_{\At}$}
\end{bprooftree}
\end{center}

    \bigskip

Since $p^{B} \in \Gamma_{\At}$ for every replaced rule due to the definition of $\bb$, clearly every instance of one of the new rules becomes an instance of atomic axiom already contained in $\ust$, so $\Pi'$ is a deduction with conclusion $\vdash_{\ust} \Gamma_{\At} \Rightarrow \Delta_{\At}$.

Hence in both cases we have a deduction $\Pi'$ using only the rules in $\ust$ with conclusion $\Gamma_{\At} \Rightarrow \Delta_{\At}$.  Now let $\Pi''$ be the deduction obtained by replacing every atom $p^{B}$ in the deduction $\Pi'$ by $B$. Since $\ust$ contains only instances of atomic axiom, atomic cut, and the rules of a simulation base, $\Pi''$ is a deduction in the classical sequent calculus (with cut) that ends with the sequent $\Gamma \Rightarrow \Delta$. Since the cut rule itself is admissible in $\CLp$ we conclude that there is a deduction in the classical sequent calculus that ends with the sequent $\Gamma \Rightarrow \Delta$. \qed
\end{proof}

\section{Cut-free completeness}
The proof of completeness just presented makes ostensive use of the cut rule. Since Gentzen's \textit{Haupsatz} shows that the cut rule is admissible in classical logic, one might be led to think that those uses are only for simplifying the proof, so completeness could also be proved through the same strategy if $\bu$ was an extension of $\hs$ instead of $\st$. However, this is not the case:

\begin{proposition}\label{failuremainlemma}
    Let $\Theta_{\At}$ be any (possibly empty) set of atoms, $\Sigma$ any (non-empty) set containing only subformulas of $\Gamma \cup \Delta$ and $\Sigma_{\At} = \{p^{B} | B \in \Sigma\}$. 
    \textbf{It is not the case} that, for all $\bb \supseteq \uhs$, $\Vdash_{\bb} \Sigma, \Theta_{\At}$ if and only if $\vdash_{\bb} \ \Rightarrow \Sigma_{\At}, \Theta_{\At}$.
\end{proposition}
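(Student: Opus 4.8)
The plan is to refute the universally quantified biconditional by exhibiting a single cut-free base $\bb \supseteq \uhs$ together with a concrete instance of $\Sigma$ and $\Theta_{\At}$ on which it breaks, chosen so as to isolate exactly the step where the proof of Lemma~\ref{mainlemma} invoked atomic cut. Concretely, I would fix two distinct atoms $p,q$, set $\Sigma = \{p \land q\}$ and $\Theta_{\At} = \varnothing$ (so that $\Sigma_{\At} = \{p^{p \land q}\}$), and regard $\uhs$ as the simulation base for $\Sigma$, whose rules are the atomic axioms $\ainit$ together with the instances of $L\land$ and $R\land$ from Fig.~\ref{fig:comp} for $p^{p \land q}$. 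I then take $\bb$ to be the extension of $\uhs$ obtained by adding the single atomic axiom $\Rightarrow p^{p \land q}$; note that this extension adds no cut, so $\bb$ is still cut-free.

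Next I would check the two sides of the biconditional separately. The right-hand side, $\vdash_{\bb} \ \Rightarrow p^{p \land q}$, holds trivially, since $\Rightarrow p^{p \land q}$ is an axiom of $\bb$. For the left-hand side, clause~\ref{eq:supp-and} together with~\ref{eq:supp-at} give that $\Vdash_{\bb} p \land q$ iff both $\vdash_{\bb} \ \Rightarrow p$ and $\vdash_{\bb} \ \Rightarrow q$, so it suffices to show that $\Rightarrow p$ is \emph{not} derivable in $\bb$. The crux of the whole argument, and the point at which the absence of atomic cut becomes decisive, is precisely this non-derivability claim, which I would establish by inspecting the last rule of a putative derivation: the only rules available in $\bb$ are $\ainit$, the added axiom $\Rightarrow p^{p \land q}$, and the simulation rules $L\land$ and $R\land$ for $p^{p \land q}$. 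Every instance of $\ainit$ has a non-empty antecedent; every sequent obtained from $\Rightarrow p^{p \land q}$ via the Axiom/Weakening clause of Definition~\ref{def:derivability} carries $p^{p \land q}$ in its succedent; $L\land$ always places $p^{p \land q}$ in the antecedent; and $R\land$ always places $p^{p \land q}$ in the succedent. Since $p^{p \land q}$ is distinct from $p$, none of these can conclude $\Rightarrow p$, whence $\not\vdash_{\bb} \ \Rightarrow p$. Therefore $\Vdash_{\bb} p \land q$ fails while $\vdash_{\bb} \ \Rightarrow p^{p \land q}$ holds, so the ``if'' direction of the biconditional is false for this $\bb$, refuting the universal claim.

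I expect the main (and essentially only) obstacle to be making the non-derivability analysis for $\not\vdash_{\bb} \ \Rightarrow p$ fully rigorous, i.e.\ ruling out every rule of the cut-free base uniformly rather than by ad hoc inspection; this is exactly where cut-freeness is used, since with atomic cut available one could combine $\vdash_{\bb} \ \Rightarrow p^{p \land q}$ with the derivable sequent $p^{p \land q} \Rightarrow p$ (obtained from $L\land$ and $\ainit$) to conclude $\vdash_{\bb} \ \Rightarrow p$, which is precisely the composition performed in the proof of Lemma~\ref{mainlemma} and unavailable in $\uhs$. I would close by remarking that the same phenomenon can be witnessed with an implication $p \to q$, taking $\bb = \uhs \cup \{\Rightarrow p^{p \to q}\}$ and the further extension $\bc = \bb \cup \{\Rightarrow p\}$ to defeat $p \Vdash_{\bb} q$; this variant mirrors the explicit cut displayed in the implication case of Lemma~\ref{mainlemma} and thus ties the failure directly to the cut steps of the completeness argument.
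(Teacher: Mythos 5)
Your proposal is correct and follows essentially the same route as the paper: the paper likewise extends $\uhs$ by a single atomic axiom $\Rightarrow p^{q\land r}$ (for a conjunction of two atoms), observes that the right-hand side of the biconditional then holds trivially, and refutes the left-hand side by noting that every available rule of the cut-free base either has a non-empty antecedent or carries $p^{q\land r}$ in its succedent, so $\Rightarrow q$ and $\Rightarrow r$ are underivable without atomic cut. Your last-rule analysis is in fact slightly more explicit than the paper's (which argues the same point by inspection of the rule shapes), and your closing remark tying the failure to the cut steps of Lemma~\ref{mainlemma} matches the paper's discussion following the proposition.
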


\begin{proof}
As a quick counterexample, consider the consequence $q \land r \Vdash q$, which gives $\Gamma \cup \Delta = {q \land r, q}$. Let $\mathcal{B}$ be the extension of $\uhs$ obtained by adding an atomic axiom with conclusion $\Rightarrow p^{q \land r}$, where $q$ and $r$ are atomic formulas. By the definition of derivability (Definition~\ref{def:derivability}) we conclude $\vdash_{\bb} \Rightarrow p^{q \land r}$. Even though we can still apply the axiom and $L \land$ rules to conclude $\vdash_{\bb} p^{q \land r} \Rightarrow q$ and $\vdash_{\bb} p^{q \land r} \Rightarrow r$, the absence of cut on the base does not allow us to use those deductions in order to obtain the deductions showing $\vdash_{\mathcal{B}} \Rightarrow q$ and $\vdash_{\mathcal{B}} \Rightarrow r$ necessary to show $\Vdash_{\uhs} q \land r$, so we have to find other deductions. Notice, however, that no rule of the base is capable of producing a deductions with one of the desired conclusions; instances of axiom never have empty antecedents, and the only other rules in $\bb$ are the rules in $\uhs$ obtained via the mapping, that is, instances of $L \land$ with $p^{q \land r}$ on their antecedent and of $R \land$ with $p^{q \land r}$ in their consequent (given that $\Gamma \cup \Delta = \{q \land r, q\}$) . Remember that, in virtue of how the mapping is defined, $p^{q \land r} \neq p$ and $p^{q \land r} \neq p$. Since there are no deductions of $\Rightarrow p$ or $\Rightarrow q$ in $\bb$ we conclude $\nVdash_{\bb} p$ and $\nVdash_{\bb} q$, so clearly $\nVdash_{\bb} q \land r$. \qed
\end{proof}

The fact that cut is admissible in classical logic {\em does not entail} that is admissible in all atomic bases. This is essentially due to the fact that cut elimination procedures rely on the structure of rules capable of introducing a formula on the right of the sequent to permute cuts, but such rules can be arbitrary in bases. It is not possible in general to compose a deduction showing $ \Rightarrow p$ and a deduction showing $p \Rightarrow q$ to obtain one showing $ \Rightarrow q$ if the cut rule is not available, so by taking out such rules we change the semantic content of bases altogether.

This allows us to conclude that atomic cut rules indeed provide substantive semantic content to the bases containing them, but it does not clarify whether this content is essential for the proof of completeness to follow through. Notice that cut is not required in any step of the proof of soundness and, since $\st$ is an extension of $\hs$, completeness can be shown for $\cfv$ through the strategy employed in Theorem \ref{thm:completeness} if we require simulation bases $\uhs$ to also contain all instances of atomic cut, whence $\cfv$ is sound and complete w.r.t. classical logic. However, such a completeness proof would still rely on the use of atomic cuts.

As we will now show, completeness can still be proved without relying on bases closed under atomic cut. In the revised proof, all necessary applications of cut are deferred from Lemma~\ref{mainlemma} to the final step. This is allowed since, at that stage, we are working within a logical calculus rather than with arbitrary bases.

The importance of this proof lies on the fact that it replaces all the semantically meaningful (that is, non-admissible) applications of atomic cut by semantically redundant (admissible) applications of cut on a logical calculus. In other words, even though atomic cut definitely affect bases and the formulas they are capable of supporting, their semantic contents are made redundant at the level of logical connectives if a proof of cut admissibility is provided.

Instead of working with simulation bases, we use the same atomic mappings as before to define quasi-simulation bases $\bq$, which will enable us to prove a result similar to Lemma~\ref{mainlemma} without relying on applications of cut. Although replacing atoms $p^{A}$ with formulas $A$ in derivations from $\bq$ does not generally yield valid deductions in the classical sequent calculus, these derivations can be transformed into classical ones via suitable rule replacement procedures. This is possible since the new rules in $\bq$ implicitly incorporate the effect of cut into their structure.

\begin{figure}[t]
\[\infer[Q \land_{1}]{\Gamma_{\At} \Rightarrow \Delta_{\At}, p^{A}}{\Gamma_{\At} \Rightarrow \Delta_{\At}, p^{A \land B}}
\quad
\infer[Q \land_{2}]{\Gamma_{\At} \Rightarrow \Delta_{\At}, p^{B}}{\Gamma_{\At} \Rightarrow \Delta_{\At}, p^{A \land B}}
\quad
\infer[R \land]{\Gamma_{\At}, \Gamma_{\At}' \Rightarrow \Delta_{\At}, \Delta_{\At}', p^{A \land B}}{\Gamma_{\At} \Rightarrow  \Delta_{\At}, p^{A}& \Gamma_{\At}' \Rightarrow \Delta_{\At}', p^{B}}
\]
\\
\[
\infer[Q\lor]{\Gamma_{\At} \Rightarrow \Delta_{\At}, p^{A}, p^{B}}{ \Gamma_{\At} \Rightarrow  \Delta_{\At}, p^{A \lor B}}\quad
\infer[R \lor]{\Gamma_{\At} \Rightarrow \Delta_{\At}, p^{A \lor B}}{\Gamma_{\At} \Rightarrow \Delta_{\At}, p^{A}, p^{B}}
\]
\\
\[
\infer[Q \to]{\Gamma_{\At}, \Gamma_{\At}' \Rightarrow \Delta_{\At}, \Delta_{\At}', p^{B}}{\Gamma_{\At} \Rightarrow  \Delta_{\At}, p^{A \to B} & \Gamma_{\At}' \Rightarrow \Delta_{\At}', p^{A}}
\quad 
\infer[R \to]{\Gamma_{\At} \Rightarrow \Delta_{\At}, p^{A \to B}}{p^{A}, \Gamma_{\At} \Rightarrow \Delta_{\At}, p^{B}}
\quad
\infer[Q \bot]{\Gamma_{\At} \Rightarrow \Delta_{\At}}{{\Gamma_{\At} \Rightarrow \Delta_{\At}}, p^{\bot}}
\]
\caption{Rules of the quasi-simulation base $\bq$.}\label{fig:quasi}
\end{figure}

\begin{definition}[Quasi-simulation base]
Let $\Sigma$ be a set of formulas, and let $\alpha$ be an atomic mapping on $\Sigma$. A quasi-simulation base $\bq$ for $\Sigma$ and $\alpha$ is a base closed under $\hs$ that includes exactly the rules in Fig.~\ref{fig:quasi}, instantiated for all $A, B \in \Sigma$ and all multisets $\Gamma_{\At}, \Delta_{\At}, \Gamma_{\At}', \Delta_{\At}'$. 
\end{definition}
We proceed as follows.
\begin{lemma}\label{mainlemmacutfree}
    Let $\Theta_{\At}$ be any (possibly empty) set of atoms, $\Sigma$ any (non-empty) set containing only subformulas of $\Gamma \cup \Delta$ and $\Sigma_{\At} = \{p^{B} | B \in \Sigma\}$. 
    Then, for all $\bb \supseteq \bq$, $\Vdash_{\bb} \Sigma, \Theta_{\At}$ if and only if $\vdash_{\bb} \ \Rightarrow \Sigma_{\At}, \Theta_{\At}$.
\end{lemma}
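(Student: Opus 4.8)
The plan is to transcribe the proof of Lemma~\ref{mainlemma} almost line for line, by induction on the complexity of $\Sigma\cup\Theta_{\At}$ (the number of connectives it contains), with one systematic change: wherever the cut-based argument introduced the principal atom on the left and then eliminated it by an atomic cut, I would instead apply the matching $Q$-rule of Fig.~\ref{fig:quasi} in a single step. The guiding observation is that each $Q$-rule is exactly the composite of a left rule with a cut against the corresponding right-introduced atom; for instance $Q\to$ passes directly from $\vdash_{\bb}\,\Rightarrow p^{A\to B},\Delta_{\At}$ and $\vdash_{\bb}\,\Rightarrow p^{A},\Delta_{\At}'$ to $\vdash_{\bb}\,\Rightarrow p^{B},\Delta_{\At},\Delta_{\At}'$, which in Lemma~\ref{mainlemma} required an $L\to$ inference followed by a cut on $p^{A\to B}$. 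Because the base $\bq$ is closed under $\hs$ only, this reshaping is what makes the argument go through without appeal to atomic cut.

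For the base case (complexity $0$) the set $\Sigma\cup\Theta_{\At}$ is atomic, $\Sigma=\Sigma_{\At}$, and both directions are immediate from~\ref{eq:supp-at}. In the inductive step I would isolate a principal connective of $\Sigma$. For conjunction, the forward direction uses~\ref{eq:supp-and}, the induction hypothesis on the two smaller sets, and one $R\land$ step (the shared atomic context collapses, contexts being sets); the backward direction applies $Q\land_1$ and $Q\land_2$ to $\Rightarrow p^{A\land B},\Omega_{\At},\Theta_{\At}$ and then the induction hypothesis and~\ref{eq:supp-and}. Disjunction is symmetric, with $R\lor$ forward and $Q\lor$ backward against~\ref{eq:supp-or}. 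For $\bot$ I would use~\ref{eq:supp-bot}: forward, the induction hypothesis gives $\vdash_{\bb}\,\Rightarrow\Omega_{\At},\Theta_{\At}$ and admissible right weakening supplies $p^{\bot}$ (this replaces the $R\bot$ rule, which is absent from $\bq$); backward, a single $Q\bot$ deletes $p^{\bot}$.

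The implication case needs the most care, since it is where Lemma~\ref{mainlemma} invoked both Lemma~\ref{lemma:categoricalreducestohypothetical} and a left rule with cut. In the forward direction I would argue exactly as before: from $\Vdash_{\bb} A\to B,\Omega,\Theta_{\At}$ get $A\Vdash_{\bb} B,\Omega,\Theta_{\At}$ by~\ref{eq:supp-imp}, show that every $\bc\supseteq\bb$ with $\vdash_{\bc}\,\Rightarrow p^{A}$ also satisfies $\vdash_{\bc}\,\Rightarrow p^{B},\Omega_{\At},\Theta_{\At}$ (induction hypothesis together with~\ref{eq:supp-inf}), apply Lemma~\ref{lemma:categoricalreducestohypothetical} to obtain $\vdash_{\bb} p^{A}\Rightarrow p^{B},\Omega_{\At},\Theta_{\At}$, and close with $R\to$. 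In the backward direction, given $\vdash_{\bb}\,\Rightarrow p^{A\to B},\Omega_{\At},\Theta_{\At}$ and an arbitrary $\bc\supseteq\bb$ with $\Vdash_{\bc} A,\Pi_{\At}$, the induction hypothesis yields $\vdash_{\bc}\,\Rightarrow p^{A},\Pi_{\At}$; a single $Q\to$ then produces $\vdash_{\bc}\,\Rightarrow p^{B},\Omega_{\At},\Theta_{\At},\Pi_{\At}$ with no cut, and the induction hypothesis followed by~\ref{eq:supp-inf} and~\ref{eq:supp-imp} delivers $\Vdash_{\bb} A\to B,\Omega,\Theta_{\At}$.

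The hard part is not any individual calculation but certifying that cut is never silently reintroduced. The only cuts in Lemma~\ref{mainlemma} occurred in the backward directions and in the backward $\bot$ step, and these are now discharged by the $Q$-rules; the forward-direction appeal to Lemma~\ref{lemma:categoricalreducestohypothetical} is cut-free by inspection of its proof, which only uses identity axioms and context propagation. I would therefore record two preliminary facts before the induction: first, that right weakening is admissible for $\vdash_{\bb}$ (by induction on derivations, weakening the leaves via the Axiom/Weakening clause and propagating through Mix of Definition~\ref{def:derivability}), which covers the missing $R\bot$; and second, that each $Q$-rule yields precisely the sequent its associated left-rule-plus-cut produced. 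With these in hand, the remainder is a routine adaptation of Lemma~\ref{mainlemma}.
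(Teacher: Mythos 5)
Your proposal is correct and takes essentially the same route as the paper's own proof: the base case and all forward directions are inherited unchanged from Lemma~\ref{mainlemma} (with admissible right weakening in place of $R\bot$, and the cut-free Lemma~\ref{lemma:categoricalreducestohypothetical} plus $R\to$ for implication), while each backward-direction use of atomic cut is replaced by the corresponding $Q$-rule of Fig.~\ref{fig:quasi}, exactly as the paper does. Your two preliminary observations (admissibility of right weakening for $\vdash_{\bb}$, and that each $Q$-rule reproduces the sequent formerly obtained by a left rule plus cut) are sound and, if anything, make explicit what the paper leaves implicit.
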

\begin{proof}[of Lemma~\ref{mainlemmacutfree}]
We prove the result by induction on the complexity of $\Sigma \cup \Theta_{\At}$, understood as the number of logical connectives occurring in $\Sigma \cup \Theta_{\At}$.

Proof of the base case and left to right direction of all inductive steps are the same as in Lemma \ref{mainlemma}, as $\bq$ and its extensions still contain $R \land$, $R \lor$ and $R \to$. The right to left directions, on the other hand, are proved by using the new rules:

    \begin{enumerate}
        
        \item $\Sigma = A \land B \cup \Omega$. 

($\Leftarrow$): Assume $\vdash_{\bb} \ \Rightarrow p^{A \land B}, \Omega_{\At}, \Theta_{\At}$. First we do the following:
\begin{center}
\begin{bprooftree}
        \AxiomC{\vdots}
        \noLine
    \UnaryInfC{$\Rightarrow p^{A \land B},\Omega_{\At}, \Theta_{\At}$}
\RightLabel{\scriptsize{$Q \land_{1}$}}
\UnaryInfC{$\Rightarrow p^{A}, \Omega_{\At}, \Theta_{\At}$}
\end{bprooftree}

\begin{bprooftree}
        \AxiomC{\vdots}
        \noLine
    \UnaryInfC{$\Rightarrow p^{A \land B},\Omega_{\At}, \Theta_{\At}$}
\RightLabel{\scriptsize{$Q \land_{2}$}}
\UnaryInfC{$\Rightarrow p^{B}, \Omega_{\At}, \Theta_{\At}$}
\end{bprooftree}
\end{center}
This shows we have both $\vdash_{\bb} \ \Rightarrow p^{A}, \Omega_{\At}, \Theta_{\At}$ and $\vdash_{\bb} \ \Rightarrow p^{B}, \Omega_{\At}, \Theta_{\At}$. Induction hypothesis: $\Vdash_{\bb} A, \Omega, \Theta_{\At}$ and $\Vdash_{\bb} B, \Omega, \Theta_{\At}$. Then by~\ref{eq:supp-and} we get $\Vdash_{\bb} A \land B, \Omega, \Theta_{\At}$.
\item $\Sigma = A \lor B \cup \Omega$.

($\Leftarrow$): Assume $\vdash_{\bb} \ \Rightarrow p^{A \lor B}, \Omega_{\At}, \Theta_{\At}$. We do the following:
\begin{center}
\begin{bprooftree}
        \AxiomC{\vdots}
        \noLine
    \UnaryInfC{$\Rightarrow p^{A \lor B}, \Omega_{\At}, \Theta_{\At}$}
\RightLabel{\scriptsize{$Q \lor$}}
\UnaryInfC{$\Rightarrow p^{A}, p^{B}, \Omega_{\At}, \Theta_{\At}$}
\end{bprooftree}
\end{center}
So we conclude $\vdash_{\bb} \ \Rightarrow p^{A}, p^{B}, \Omega_{\At}, \Theta_{\At}$. Then the induction hypothesis yields $\Vdash_{\bb} A, B, \Omega, \Theta_{\At}$, so by~\ref{eq:supp-or}, $\Vdash_{\bb} A \lor B, \Omega, \Theta_{\At}$.
\item $\Sigma = A \to B \cup \Omega$

($\Leftarrow$):Assume $\vdash_{\bb} \ \Rightarrow p^{A \to B}, \Omega_{\At}, \Theta_{\At}$, and let $\bc$ be any $\bc\supseteq \bb$ with $\Vdash_{\bc} A, \Pi_{\At}$ for some set of atoms $\Pi_{\At}$. Then by induction hypothesis $\vdash_{\bc} \Rightarrow p^{A}, \Pi_{\At}$ and, since $\bc$ is an extension of $\bb$, we also have $\vdash_{\bc} \ \Rightarrow p^{A \to B},  \Omega_{\At}, \Theta_{\At}$. Then we can obtain the following deduction in $\bc$:
\begin{center}
\begin{bprooftree}
    \AxiomC{.}
    \noLine
    \UnaryInfC{.}
        \noLine
    \UnaryInfC{.}
        \noLine
    \UnaryInfC{$\Rightarrow p^{A \to B},  \Omega_{\At}, \Theta_{\At}$}
      \AxiomC{.}
    \noLine
    \UnaryInfC{.}
        \noLine
    \UnaryInfC{.}
        \noLine
    \UnaryInfC{$\Rightarrow p^{A}, \Pi_{\At}$}
    \RightLabel{\scriptsize{$Q \to$}}
\BinaryInfC{$\Rightarrow p^{B},  \Omega_{\At}, \Theta_{\At}, \Pi_{\At}$}
\end{bprooftree}
\end{center}
So $\vdash_{\bc} \ \Rightarrow p^{B}, \Omega_{\At}, \Theta_{\At}, \Pi_{\At}$, and by induction hypothesis $\Vdash_{\bc} B,\Omega, \Theta_{\At}, \Pi_{\At}$ (as $\Theta_{\At} \cup \Pi_{\At}$ is still a set of atoms). But $\bc$ is an arbitrary extension of $\bb$ with $\Vdash_{\bc} A, \Pi_{\At}$ for an arbitrary set of atoms $\Pi_{\At}$, so we conclude $A \Vdash_{\bb} B, \Omega, \Theta_{\At}$ by~\ref{eq:supp-inf} and then $\Vdash_{\bb} A \to B, \Omega, \Theta_{\At}$ by~\ref{eq:supp-imp}.
\item $\Sigma = \bot \cup \Omega$

($\Leftarrow$): Assume $\vdash_{\bb}\  \Rightarrow p^{\bot}, \Omega_{\At}, \Theta_{\At}$. We do the following:
\begin{center}
\begin{bprooftree}
      \AxiomC{.}
    \noLine
    \UnaryInfC{.}
        \noLine
    \UnaryInfC{.}
        \noLine
    \UnaryInfC{$\Rightarrow p^{\bot}, \Omega_{\At}, \Theta_{\At}$}
        \RightLabel{\scriptsize{$Q \bot$}}
    \UnaryInfC{$\Rightarrow \Omega_{\At}, \Theta_{\At}$}
\end{bprooftree}
\end{center}
So $\vdash_{\mathcal{B}} \Omega_{\At}, \Theta_{\At}$. The induction hypothesis yields $\Vdash_{\mathcal{B}} \Omega, \Theta_{\At}$, whence $\Vdash_{\mathcal{B}} \bot, \Omega, \Theta_{\At}$ by~\ref{eq:supp-bot}. \qed
    \end{enumerate}
\end{proof}

\begin{theorem}[Cut-free completeness]\label{thm:cutfreecompleteness}
$\Gamma \cfv \Delta$ implies that there is a classical proof of the sequent $\Gamma \Rightarrow \Delta$.
\end{theorem}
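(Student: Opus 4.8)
The plan is to mirror the structure of the proof of Theorem~\ref{thm:completeness}, but using quasi-simulation bases $\bq$ in place of simulation bases $\bu$, so that no atomic cut is ever invoked during the base-level reasoning. First I would fix a quasi-simulation base $\bq$ with an atomic mapping $\alpha$ for the set $\Gamma \cup \Delta$, and assume $\Gamma \cfv \Delta$, which by Definition~\ref{def:validity} gives $\Gamma \Vdash_{\bq} \Delta$ (since $\bq \supseteq \hs$). As before, I would let $\bb$ be the base obtained from $\bq$ by adding, for each $B \in \Gamma$, an atomic axiom with conclusion $\Rightarrow p^{B}$. Then $\vdash_{\bb} \Rightarrow p^{B}$ for each such $B$, so by Lemma~\ref{mainlemmacutfree} we obtain $\Vdash_{\bb} B$ for all $B \in \Gamma$, and the assumption $\Gamma \Vdash_{\bb} \Delta$ yields $\Vdash_{\bb} \Delta$. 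Applying Lemma~\ref{mainlemmacutfree} once more gives $\vdash_{\bb} \Rightarrow \Delta_{\At}$, witnessed by some derivation $\Pi$.

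Next I would carry out the same context-insertion transformation as in Theorem~\ref{thm:completeness}: if $\Pi$ uses the newly added axioms $\Rightarrow p^{B}$, I insert $\Gamma_{\At}$ on the left of every such axiom and of every sequent below it, turning each into an instance of the atomic axiom $\Gamma_{\At}, p^{B} \Rightarrow \ldots$ (legitimate since $p^{B} \in \Gamma_{\At}$), and thereby obtaining a derivation $\Pi'$ of $\Gamma_{\At} \Rightarrow \Delta_{\At}$ using only the rules of $\bq$. The crucial difference from the earlier proof is the final rule-replacement step, which I expect to be \emph{the main obstacle}. Replacing each atom $p^{A}$ by the formula $A$ does \emph{not} directly yield a valid $\CLp$ derivation, because the rules $Q\land_1, Q\land_2, Q\lor, Q\to, Q\bot$ of Fig.~\ref{fig:quasi} have no counterpart among the $\CLp$ right rules — they are inverted/elimination-flavoured rules that implicitly bake in a cut. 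The core of the argument is therefore to show that each such quasi-rule instance, once atoms are decoded to formulas, can be simulated in $\CLp$ by combining the corresponding genuine introduction rule with one application of $\cut$ against an identity axiom; for example, an instance of $Q\to$ concluding $\Gamma, \Gamma' \Rightarrow \Delta, \Delta', B$ from $\Gamma \Rightarrow \Delta, A \to B$ and $\Gamma' \Rightarrow \Delta', A$ is reconstructed by cutting the premises against $L\to$ applied to the identity axioms $A \Rightarrow A$ and $B \Rightarrow B$.

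Assembling these local simulations turns $\Pi'$ into a derivation $\Pi''$ of $\Gamma \Rightarrow \Delta$ in $\CLp$ augmented with $\cut$. The remaining — and deliberately deferred — appeal to cut admissibility (Gentzen's \emph{Hauptsatz}) for $\CLp$ then removes all these introduced cuts, producing a genuine cut-free classical proof of $\Gamma \Rightarrow \Delta$. The key conceptual point, already flagged in the surrounding text, is that all semantically substantive (non-admissible) atomic cuts have been eliminated from the base-level reasoning of Lemma~\ref{mainlemmacutfree}; the only cuts remaining live at the level of logical connectives, where they are provably admissible and hence semantically redundant. I expect the verification that every quasi-rule admits such a cut-based simulation to be routine but essential, and it is precisely here that the injectivity of $\alpha$ and the subformula condition on $\Sigma$ guarantee that the decoding $p^{A} \mapsto A$ is well-defined and sends each quasi-rule instance to a correctly-typed logical inference.
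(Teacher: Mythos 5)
Your proposal is correct and follows essentially the same route as the paper's own proof: the same use of Lemma~\ref{mainlemmacutfree} over a quasi-simulation base, the same context-insertion transformation inherited from Theorem~\ref{thm:completeness}, the same local simulation of each decoded quasi-rule ($Q\land_1, Q\land_2, Q\lor, Q\to, Q\bot$) by the corresponding left rule applied to identity axioms followed by cut, and the same final appeal to cut admissibility in $\CLp$. Your reconstruction of the $Q\to$ case in particular matches the paper's explicit transformation (two cuts against $L\to$ applied to $A \Rightarrow A$ and $B \Rightarrow B$) exactly.
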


\begin{proof}
    Fix a quasi-simulation base $\bq$ with a mapping $\alpha$ for the set $\Gamma \cup \Delta$. Assume that $\Gamma \cfv \Delta$. Then it follows that $\Gamma \Vdash_{\bq} \Delta$. By applying the same procedure used in Theorem \ref{thm:completeness} we obtain a deduction $\Pi$ showing $\vdash_{\bq} \Gamma_{\At} \Rightarrow \Delta_{\At}$. Now let $\Pi'$ be the deduction obtained by replacing every atom $p^{B}$ in the deduction $\Pi$ by $B$. Denote by $Q\land^{*}_{1}$, $Q\land^{*}_{2}$, $Q\lor^{*}$, $Q\to^{*}$ and $Q\bot^{*}$ the rules obtained by replacing all atoms $p^{B}$ by formulas $B$  in the rules $Q\land_{1}$, $Q\land_{2}$, $Q\lor$, $Q\to$ and $Q\bot$, respectively. Now define the following transformation procedures:

\begin{prooftree}
    \AxiomC{\vdots}
    \noLine
    \UnaryInfC{$\Theta \Rightarrow A \land B, \Sigma$} \RightLabel{\scriptsize{$Q \land^{*}_{1}$}}
    \UnaryInfC{$\Theta \Rightarrow A, \Sigma$}
    \DisplayProof
    \quad
    $\mapsto$
    \quad
     \AxiomC{\vdots}
     \noLine
     \UnaryInfC{$\Theta \Rightarrow A \land B, \Sigma$}
    \AxiomC{}
    \RightLabel{\scriptsize{$init$}}
    \UnaryInfC{$A, B \Rightarrow A$}
    \RightLabel{\scriptsize{$\land L$}}
    \UnaryInfC{$A \land B \Rightarrow A$}
    \RightLabel{\scriptsize{$Cut$}}
    \BinaryInfC{$\Theta \Rightarrow A, \Sigma$}
\end{prooftree}

\begin{prooftree}
    \AxiomC{\vdots}
    \noLine
    \UnaryInfC{$\Theta \Rightarrow A \land B, \Sigma$} \RightLabel{\scriptsize{$Q \land^{*}_{2}$}}
    \UnaryInfC{$\Theta \Rightarrow B, \Sigma$}
    \DisplayProof
    \quad
    $\mapsto$
    \quad
     \AxiomC{\vdots}
     \noLine
     \UnaryInfC{$\Theta \Rightarrow A \land B, \Sigma$}
    \AxiomC{}
    \RightLabel{\scriptsize{$init$}}
    \UnaryInfC{$A, B \Rightarrow B$}
    \RightLabel{\scriptsize{$\land L$}}
    \UnaryInfC{$A \land B \Rightarrow B$}
    \RightLabel{\scriptsize{$Cut$}}
    \BinaryInfC{$\Theta \Rightarrow B, \Sigma$}
\end{prooftree}

\begin{prooftree}
    \AxiomC{\vdots}
    \noLine
    \UnaryInfC{$\Theta \Rightarrow A \lor B, \Sigma$} \RightLabel{\scriptsize{$Q \lor^{*}$}}
    \UnaryInfC{$\Theta \Rightarrow A,B,  \Sigma$}
    \DisplayProof
    $\mapsto$
     \AxiomC{\vdots}
     \noLine
     \UnaryInfC{$\Theta \Rightarrow A \lor B, \Sigma$}
    \AxiomC{}
    \RightLabel{\scriptsize{$init$}}
    \UnaryInfC{$A \Rightarrow A, \Sigma$}
     \AxiomC{}
    \RightLabel{\scriptsize{$init$}}
    \UnaryInfC{$B \Rightarrow B, \Sigma$}
    \RightLabel{\scriptsize{$\lor L$}}
    \BinaryInfC{$A \lor B \Rightarrow A, B, \Sigma$}
    \RightLabel{\scriptsize{$Cut$}}
    \BinaryInfC{$\Theta \Rightarrow A, B, \Sigma$}
\end{prooftree}

\begin{prooftree}
    \AxiomC{\vdots}
    \noLine
    \UnaryInfC{$\Theta \Rightarrow \bot, \Sigma$}
    \RightLabel{\scriptsize{$Q \bot^{*}$}}
    \UnaryInfC{$\Theta \Rightarrow \Sigma$}
    \DisplayProof
    \quad
    $\mapsto$
    \quad
    \AxiomC{\vdots}
    \noLine
    \UnaryInfC{$\Theta \Rightarrow \bot, \Sigma$}
    \AxiomC{}
    \RightLabel{\scriptsize{$\bot L$}}
    \UnaryInfC{$\bot \Rightarrow \varnothing$}
    \RightLabel{\scriptsize{$Cut$}}
    \BinaryInfC{$\Theta \Rightarrow \Sigma$}
\end{prooftree}

\begin{prooftree}
    \AxiomC{\vdots}
    \noLine
    \UnaryInfC{$\Theta \Rightarrow A \to B, \Sigma$} 
     \AxiomC{\vdots}
    \noLine
    \UnaryInfC{$\Theta' \Rightarrow A, \Sigma'$}
    \RightLabel{\scriptsize{$Q \to^{*}$}}
    \BinaryInfC{$\Theta, \Theta' \Rightarrow B,  \Sigma., \Sigma'$}
   \end{prooftree}
   \begin{prooftree}
   $\mapsto$
    \AxiomC{\vdots}
    \noLine
    \UnaryInfC{$\Theta \Rightarrow A \to B, \Sigma$} 
      \AxiomC{\vdots}
    \noLine
    \UnaryInfC{$\Theta' \Rightarrow A, \Sigma'$}
    \AxiomC{}
    \RightLabel{\scriptsize{$init$}}
    \UnaryInfC{$A \Rightarrow A$}
      \AxiomC{}
      \RightLabel{\scriptsize{$init$}}
    \UnaryInfC{$B \Rightarrow B$}
    \RightLabel{\scriptsize{$\to L$}}
    \BinaryInfC{$A \to B, A \Rightarrow B$}
    \RightLabel{\scriptsize{$Cut$}}
    \BinaryInfC{$\Theta', A \to B \Rightarrow B, \Sigma'$}
        \RightLabel{\scriptsize{$Cut$}}
    \BinaryInfC{$\Theta, \Theta' \Rightarrow B, \Sigma, \Sigma'$}
\end{prooftree}
\bigskip

Let $\Pi''$ be the deduction obtained by applying the procedures depicted above to all instances of $Q\land^{*}_{1}$, $Q\land^{*}_{2}$, $Q\lor^{*}$, $Q\to^{*}$ and $Q\bot^{*}$ in $\Pi'$. It is straightforward to check that $\Pi''$ is a deduction with conclusion $\Gamma \Rightarrow \Delta$ in the system obtained by adding the cut rule to our classical sequent calculus, hence since cut is admissible in that system we conclude that there is a deduction of $\Gamma \Rightarrow A$ in the classical sequent calculus.
\qed
\end{proof}

\section{Conclusion}
We have presented a novel formulation of base-extension semantics grounded in sequent calculus rather than natural deduction. This shift in perspective brings several benefits. First, the use of multiple-conclusion sequents enables a direct and elegant alignment with classical logic, overcoming traditional associations of proof-theoretic semantics with intuitionistic reasoning. By leveraging the invertibility and harmony of sequent rules we achieve semantic clauses that mirror classical inference patterns without resorting to external reasoning principles.

Our approach accommodates both cut-free and cut-inclusive systems, shedding light on the semantic role of the atomic cut rule. We demonstrated that atomic cuts significantly influence the behaviour of specific bases, even though they are not required for completeness at the level of logical connectives. This insight not only refines our understanding of the interplay between structural rules and semantics, but also reveals how semantically meaningful cuts can be modularly absorbed into logical reasoning.



Finally, our sequent-based $\Bes$ opens the door to broader applications. The modularity of the framework suggests clear pathways for extending the semantics to substructural and modal logics, as well as to other non-classical systems. For instance, even though the harmony of classical rules allows us to consider only the right side of the sequent when defining atomic support, a stronger clause with added structure could be defined as follows:

\[
\Vdash^{\Gamma_{\At}}_{\bb} \Delta_{\At}  \mbox{ iff } \Gamma_{\At} \Rightarrow \Delta_{\At} \mbox{ is provable in } \bb 
\]

This use of contexts is somewhat similar to the one observed in \cite{DBLP:journals/corr/abs-2402-01982,DBLP:conf/tableaux/GheorghiuGP23}, but now the contexts $\Gamma_{\At}$ are bound by the rules of the base instead of being arbitrary sets of hypotheses. It is also interesting to note that, if a stricter notion of derivability is used, atomic instances of weakening and contraction seem to have a concrete semantic impact on atomic bases, just like the instances of atomic cut rules in the current approach.

By reconstructing $\Pts$ in a sequent setting, we hope to contribute to a more general, flexible, and expressive theory of logical meaning---one that is well-positioned to interact with contemporary developments in logic, computation and philosophy.

\begin{credits}
\subsubsection{\ackname} Piotrovskaya is supported by the Engineering and Physical Sciences Research Council grants EP/T517793/1 and EP/W524335/1. Barroso-Nascimento and Pimentel are supported by the Leverhulme grant RPG-2024-196. Pimentel has received funding from the European Union's Horizon 2020 research and innovation programme under the Marie Sk\l odowska-Curie grant agreement Number 101007627.
We are grateful for the useful suggestions from the anonymous referees.
\end{credits}

\newpage
\bibliographystyle{splncs04}
\bibliography{references} 
\end{document}